\newtheorem{theorem}{Theorem}[section]
\newtheorem{lemma}[theorem]{Lemma}
\newcommand{\method}{\textsc{AliDiff}\xspace}
\def\eqref#1{equation~\ref{#1}}
\def\1{\bm{1}}
\def\rr{{\textnormal{r}}}
\def\rvf{{\mathbf{f}}}
\def\rvh{{\mathbf{h}}}
\def\rvm{{\mathbf{m}}}
\def\rvp{{\mathbf{p}}}
\def\rvr{{\mathbf{r}}}
\def\rvs{{\mathbf{s}}}
\def\rvv{{\mathbf{v}}}
\def\rvx{{\mathbf{x}}}
\def\vmu{{\bm{\mu}}}
\def\vtheta{{\bm{\theta}}}
\def\vc{{\bm{c}}}
\def\mI{{\bm{I}}}
\DeclareMathAlphabet{\mathsfit}{\encodingdefault}{\sfdefault}{m}{sl}
\SetMathAlphabet{\mathsfit}{bold}{\encodingdefault}{\sfdefault}{bx}{n}
\def\gD{{\mathcal{D}}}
\def\gL{{\mathcal{L}}}
\def\gM{{\mathcal{M}}}
\def\gP{{\mathcal{P}}}
\newcommand{\R}{\mathbb{R}}
\DeclareRobustCommand\onedot{\futurelet\@let@token\@onedot}
\def\@onedot{\ifx\@let@token.\else.\null\fi\xspace}
\def\eg{\textit{e.g}\onedot}
\title{Aligning Target-Aware Molecule Diffusion Models\\with Exact Energy Optimization}
\author{%
  Siyi Gu$^{* 1}$, Minkai Xu$^{* 1 \dagger}$, Alexander Powers$^1$, Weili Nie$^2$, Tomas Geffner$^2$\\
  \textbf{Karsten Kreis$^2$, Jure Leskovec$^1$, Arash Vahdat$^2$, Stefano Ermon$^1$}\\
  $^{1}$ Stanford Univeristy $^{2}$ NVIDIA\\
  \texttt{\{sgu33,minkai,jure,ermon\}@cs.stanford.edu \, lxpowers@stanford.edu}\\
  \texttt{\{wnie,tgeffner,kkreis,avahdat\}@nvidia.com}
}
\begin{document}

\maketitle

\begin{abstract}
\looseness=-1
    Generating ligand molecules for specific protein targets, known as structure-based drug design, is a fundamental problem in therapeutics development and biological discovery. Recently, target-aware generative models, especially diffusion models, have shown great promise in modeling protein-ligand interactions and generating candidate drugs. However, existing models primarily focus on learning the chemical distribution of all drug candidates, which lacks effective steerability on the chemical quality of model generations. In this paper, we propose a novel and general alignment framework to align pretrained target diffusion models with preferred functional properties, named \method.
    \method shifts the target-conditioned chemical distribution towards regions with higher binding affinity and structural rationality, specified by user-defined reward functions, via the preference optimization approach. To avoid the overfitting problem in common preference optimization objectives, we further develop an improved Exact Energy Preference Optimization method to yield an exact and efficient alignment of the diffusion models, and provide the closed-form expression for the converged distribution. Empirical studies on the CrossDocked2020 benchmark show that \method can generate molecules with state-of-the-art binding energies with up to -7.07 Avg.~Vina Score, while maintaining strong molecular properties. 
    Code is available at \url{https://github.com/MinkaiXu/AliDiff}.
\end{abstract}

\section{Introduction}

\renewcommand{\thefootnote}{\fnsymbol{footnote}}
\footnotetext[1]{Equal contribution; junior author listed earlier. $^\dagger$Correspondence to: Minkai Xu <{minkai@cs.stanford.edu}>.}
\renewcommand{\thefootnote}{\arabic{footnote}}

Generating ligand molecules with desirable properties and high affinity to protein targets, known as structure-based drug design (SBDD), is a fundamental problem in therapeutic design and biological discovery. 
It necessitates methods that can produce realistic and diverse drug-like molecules with stable 3D structures and high binding affinities. 
The problem is challenging due to the vast combinatorial chemical space~\citep{ragoza2022generating} and high degree of freedom of protein targets~\citep{qiao2024state}. In the past few years, numerous deep generative models have been proposed  to generate molecules in string sequences~\citep{kusner2017grammar,segler2018generating} or atom-bond
graph representations~\citep{jin2018junction,Shi2020GraphAF}. Although these models have shown promise in generating plausible drug-like molecules, they lack sufficient modeling of the 3D protein-ligand interaction with proteins and therefore can hardly be adopted in target-aware molecule generation. As a result, generating ligands conditioned on protein targets remains an open problem.

Recently, with rapid progress in structural biology and the increasing scale of structural data~\citep{francoeur2020three,jumper2021highly}, numerous target-aware generative models have been proposed to directly generate molecules within the protein targets in 3D. Initial work proposed to sequentially place atoms within the target via autoregressive models~\citep{luo20213d, liu2022generating, peng2022pocket2mol}, while later work learns diffusion models to jointly design the whole ligand with state-of-the-art results~\citep{guan20233d, lin2022diffbp, schneuing2022structure,huang2023protein, guan2024decompdiff}. Following the biological principle to model the protein-ligand complex interactions, these methods have shown great promise in generating realistic drugs that can bind toward given targets.
However, all existing models solely focus on learning the chemical distribution of candidate molecules and treat all training samples equally, while in practice, only the ligand molecules with strong binding affinity and high synthesizability are preferred for real-world therapeutic development. As a result, existing learned models generally lack sufficient steerability regarding the relative quality of model generations and cannot generate faithful samples with the desirable properties. 

\begin{wrapfigure}{r}{0.5\textwidth}
\vspace{-5pt}
    \includegraphics[width=0.99\linewidth]{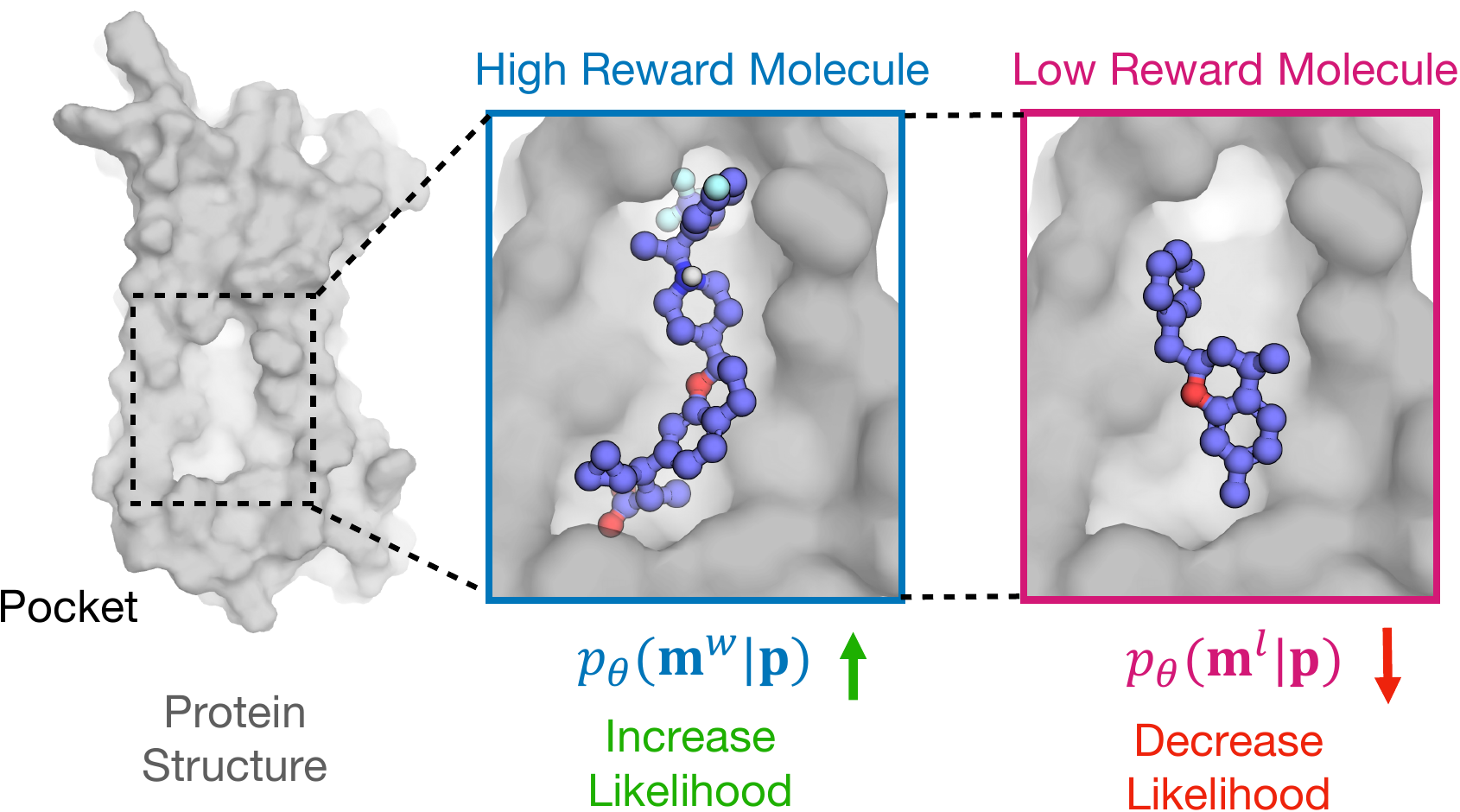} 
    \caption{High-level illustration of \method. For a protein target, we can have multiple candidate ligands and rank the preference by certain reward functions, \eg, binding energy. We align the target-aware molecule diffusion model with these preferences by adjusting the conditional likelihoods.}
\label{fig:wrapfig}
\end{wrapfigure}

\looseness=-1
To bridge the gap between existing SBDD models and the necessity for designing ligands with favorable properties,  in this paper, we introduce a novel and comprehensive alignment framework to align pretrained target-aware diffusion models with preferred functional properties, named \method. \method adjusts the target-conditioned chemical distribution toward regions characterized by lower binding energy and structural rationality, as specified by a user-defined reward function, using a preference optimization approach. To this end, we derive a unified variational lower bound to align the likelihoods of both discrete chemical type and continuous 3D coordinate features. We further analyze the winning data overfitting problem commonly associated with preference optimization objectives, and introduce an improved Exact Energy Preference Optimization (E$^2$PO) method. E$^2$PO analytically ensures a precise and efficient alignment of diffusion models, and we provide a closed-form expression for the converged distribution. 
Our key contributions can be summarized as follows:
\begin{itemize}
    \item We address the challenge of designing favorable target-aware molecules from the perspective of aligning molecule generative models with desirable properties. We introduce the energy preference optimization framework and derive variational lower bounds to align diffusion models for generating molecules with high binding affinity to binding targets.
    \item We analyze the overfitting issue in the preference optimization objective, and propose an improved exact energy optimization method to yield an exact alignment towards target distribution shifted by reward functions.
    \item We conduct comprehensive comparisons and ablation studies on the CrossDocked2020~\citep{francoeur2020three} benchmark to justify the effectiveness of \method. Empirical results demonstrate that \method can generate molecules with state-of-the-art binding energies with up to -7.07 Avg. Vina Score, while maintaining strong molecular properties.
\end{itemize}

\section{Related Work}

\textbf{Structure-Based Drug Design.} With increasing amount of structural data becoming accessible, generative models have attracted growing attention for structure-based molecule generation. 
Early research~\citep{skalic2019target} proposes to generate SMILES representations from protein contexts by sequence generative models. 
Inspired by the progress in 3D and geometric modeling, many works proposed to solve the problem directly in 3D space. 
For instance, \citet{ragoza2022generating} voxelizes molecules within atomic density grids and generates them through a Variational Autoencoder framework. \citet{luo20213d,peng2022pocket2mol, powers2024} developed autoregressive models to generate molecules by sequentially placing atoms or chemical groups within the target. Following the autogressive backbone, FLAG\citep{zhang2023molecule} and DrugGPS~\citep{zhang2023learning} take advantage of chemical priors of molecular fragments to generate ligand molecules piece by piece, leading to more realistic substructures. More recently, 
diffusion models achieved exceptional results in synthesizing high-quality images and texts, which have also been successfully used for ligand molecule generation~\citep{guan20233d, lin2022diffbp, schneuing2022structure,huang2023protein, guan2024decompdiff}. These models generate molecules by progressively denoising atom types and coordinates while maintaining physical symmetries with SE(3)-equivariant neural networks. 
While the existing works focus on designing molecules using various deep generative models, they often struggle with generating molecules that exhibit different desirable properties, \eg, strong binding affinity, high synthesizability, and low toxicity. Real-word drug discovery projects almost always seek to optimize or constrain these properties~\citep{d2012multi,bickerton2012quantifying}. 
In this work, we aim to address the challenge with a novel and general preference optimization framework.

\textbf{Reinforcement learning from human feedback (RLHF).}  Recently, significant efforts have been devoted to aligning generative models with human preferences. The use of reinforcement learning to incorporate feedback from humans and AI into finetuning large language models is exemplified by Reinforcement Learning from Human Feedback (RLHF)~\citep{ziegler2020finetuning,ouyang2022training}. Research works have incorporated human feedback to improve performance across various domains, such as machine translation~\citep{nguyen2017reinforcement}, summarization~\citep{stiennon2020learning}, and also diffusion models~\citep{uehara2024feedback,uehara2024fine}. Notably, \citet{rafailov2023direct} designed a new preference paradigm that enables training language models to satisfy human preferences directly without reinforcement learning. This algorithm was later applied to diffusion models for text-to-image generation tasks~\citep{wallace2023diffusion}. Concurrent work~\citep{zhou2024antigen} attempts to apply DPO for designing antibodies with rationality and functionality. To the best of our knowledge, we are the first alignment approach for target-aware ligand design, where the conditional distribution is shifted toward desirable properties.

\begin{figure*}[!t]
    \centering
    \includegraphics[width=1.0\linewidth]{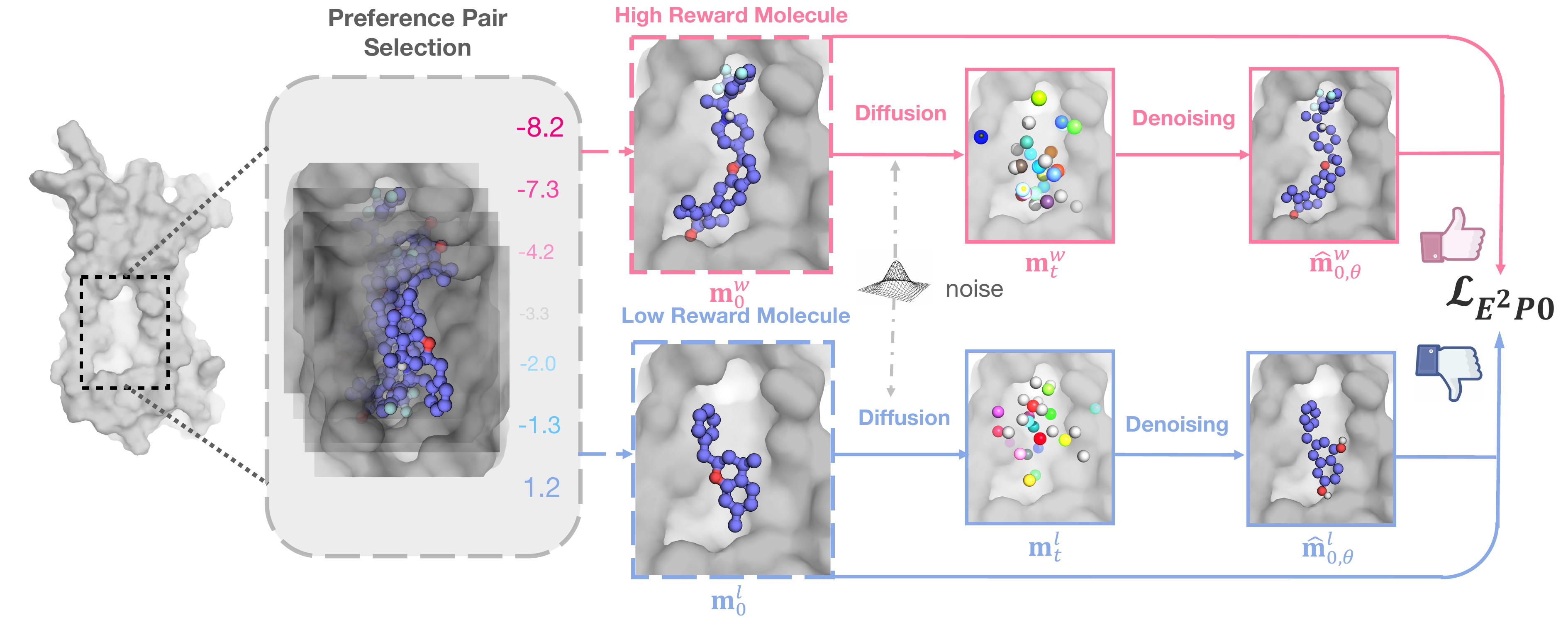}
    \vspace{-15pt}
    \caption{Overview of \method. This workflow can be summarized as 1) For each protein target (pocket) $\rvp$ in the training set, we retrieve two candidate ligands $\rvm$; 2) Label the two ligands as wining sample $\rvm^w$ and losing sample $\rvm^l$ by desirable properties, \eg, binding energies; 3) Calculate the preference optimization objective \Cref{eq:loss-alidiff-exact} and update the molecule diffusion model $p_\theta$.}
    \label{fig:framework}
\end{figure*}

\section{Method}

In this section we present \method, a general framework for aligning target-aware diffusion models with various molecular functionalities. We first provide an overview of the target-aware ligand diffusion model and our Reinforcement Learning from Feedback formulation (\cref{subsec:method-overview}). Next, we introduce the energy optimization approach for aligning the diffusion model and analyze the potential limitations of the framework (\cref{subsec:method-dpo}). We then further introduce an exact energy optimization method from a distribution matching perspective to align the generative model efficiently and exactly (\cref{subsec:method-epo}). A visualization of the framework is shown in \Cref{fig:framework}.

\subsection{Overview}
\label{subsec:method-overview}

\textbf{Notation.} We focus on aligning molecule generative models for structure-based drug design, which can be abstracted as generating molecules that can bind to a given protein target. Following the convention in the  related literature~\citep{luo20213d,guan20233d}, the molecule and target protein are represented as $\gM=\{(\rvx_{M}^{(i)}, \rvv_{M}^{(i)}) \}_{i=1}^{N_M}$ and $\gP = \{ (\rvx_{P}^{(i)}, \rvv_{P}^{(i)}) \}_{i=1}^{N_P}$, respectively, where $N_M$ and $N_P$ denote the number of atoms of the molecule $\mathcal{M}$ and the protein $\mathcal{P}$. $\rvx \in \R^3$ and $\rvv \in \R^K$ denote the atomic 3D position and chemical type, respectively, with $K$ being the dimension of atom types. 
For brevity, we denote the molecule as a  matrix $\rvm=[\rvx_M, \rvv_M]$ where $\rvx_M\in \R^{N_M\times 3}$ and $\rvv_M\in \R^{N_M\times K}$, and denote the protein as a matrix $\rvp=[\rvx_P, \rvv_P]$ where $\rvx_P\in \R^{N_P\times 3}$ and $\rvv_P\in \R^{N_P\times K}$. 
The task can then be formulated as modeling the conditional distribution $p(\rvm|\rvp)$.

\textbf{Preliminaries.} Diffusion Models have been previously used to model the joint distribution of atomic types and positions~\citep{guan20233d,schneuing2022structure,lin2022diffbp}. 
This approach consists of a forward diffusion
process and a reverse generative (denoising) process. Both processes are only defined on the ligand molecules $\rvm$, with fixed proteins $\rvp$. In the forward process, small Gaussian and categorical noises are gradually injected on atomic coordinates $\rvx$ and types $\rvv$ as follows:
\begin{align}
\label{eq:diffusion-forward-kernal}
    q(\rvm_t|\rvm_{t-1}, \rvp)=
    \mathcal{N}(\rvx_{t};\sqrt{1-\beta_t}\rvx_{t-1},\beta_t\mI) \cdot \mathcal{C}(\rvv_{t};(1-\beta_t)\rvv_{t-1}+\beta_t/K),
\end{align}
where $\mathcal{N}$ and $\mathcal{C}$ stand for the Gaussian and categorical distribution respectively, and $\beta_t$ corresponds to a  (fixed or learnable) variance schedule. Note that, in certain recent work $q$ process can be learnable with dependence on the conditioning $\rvp$~\citep{huang2023protein}.
We omit the subscript $M$ for the ligand molecule without ambiguity here and denote the atom positions and types at time step $t$ as $\rvx_t$ and $\rvv_t$.  
Using Bayes theorem, the posterior conditioned on $\rvm_0$ can be computed in closed form: 
\begin{align}
\label{eq:diffusion-forward}
    q(\rvm_{t-1}|\rvm_t,\rvm_0, \rvp)
    =\mathcal{N}(\rvx_{t-1};\Tilde{\vmu}(\rvx_{t}, \rvx_{0}),\Tilde{\beta}_t\mI) \cdot \mathcal{C}(\rvv_{t-1};\Tilde{\vc}(\rvv_{t},\rvv_{0})),
\end{align}
where 
$\Tilde{\vmu}(\rvx_{t},\rvx_{0})=\frac{\sqrt{\Bar{\alpha}_{t-1}}\beta_t}{1-\Bar{\alpha}_t}\rvx_{0}+\frac{\sqrt{\alpha}_t(1-\Bar{\alpha}_{t-1})}{1-\Bar{\alpha}_t}\rvx_{t}$, $\Tilde{\beta}_t=\frac{1-\Bar{\alpha}_{t-1}}{1-\Bar{\alpha}_t}\beta_t$, 
$\alpha_t=1-\beta_t$, 
$\Bar{\alpha}_t=\prod_{s=1}^t \alpha_s$,  
$\Tilde{\vc}(\rvv_{t},\rvv_{0})=\frac{\vc^*}{\sum_{k=1}^Kc^*_k}$, and $\vc^*(\rvv_{t},\rvv_{0})=[\alpha_t\rvv_{t}+(1-\alpha_t)/K]\odot[\Bar{\alpha}_{t-1}\rvv_{0}+(1-\Bar{\alpha}_{t-1})/K]$~\citep{ho2020denoising,austin2011d3pm}.
At timestep $T$, $q$ converges to the prior with Gaussians on coordinates and uniforms on atom types.
The reverse process, also known as the generative process, learns a neural network parameterized by $\theta$ to recover data by iterative denoising. The denoising step can be approximated with predicted Gaussians $\vmu_\theta$ and categorical distributions $\vc_\theta$ as follows:
\begin{equation}
\begin{aligned}
\label{eq:diffusion-denoising}
    p_\theta(\rvm_{t-1}|\rvm_t, \rvp) & = \mathcal{N}(\rvx_{t-1};{\vmu}_\theta ([{\rvx}_t, {\rvv}_{t}], t, \rvp),\Tilde{\beta}_t\mI)\cdot \mathcal{C}(\rvv_{t-1}; {\vc}_\theta ([{\rvx}_t, {\rvv}_{t}], t, \rvp)) \\
    & = \mathcal{N}(\rvx_{t-1};\Tilde{\vmu}(\rvx_{t},\hat{\rvx}_{0}),\Tilde{\beta}_t\mI)\cdot \mathcal{C}(\rvv_{t-1}; \Tilde{\vc}(\rvv_{t},\hat{\rvv}_{0})),
\end{aligned}
\end{equation}
where $[\hat{\rvx}_{0}, \hat{\rvv}_{0}] = \epsilon_\theta ([{\rvx}_t, {\rvv}_{t}], t, \rvp)$ are predictions from a denoising network $\epsilon_\theta$. Importantly, the denoising network here is specifically parameterized by equivariant neural networks, resulting in an SE(3)-invariant likelihood $p_\theta(\rvm|\rvp)$ on the protein-ligand complex~\citep{xu2022geodiff}.


\textbf{Overview.} As ligand molecules with desirable properties, \eg, high binding affinity and synthesizability, are required for real-world therapeutic development, we aim to align the ligand diffusion model with these preferences. 
Such preferences can be defined as a reward model $r(\cdot): \gM \times \gP \rightarrow \mathbb{R}$ 
calculated from various cheminformatics software, \eg, binding affinity, drug-likeness, synthesizability, or their combinations.
We fine-tune and align the pre-trained diffusion model with the reinforcement learning framework. Specifically, given a dataset $\gD$ containing given protein targets, inspired by RLHF~\citep{ouyang2022training}, this fine-tuning is achieved by maximizing the reward: 
\begin{align}
\label{eq:rlhf}
    \max_{p_\vtheta} \mathbb{E}_{\rvp \sim \gD, \rvm \sim p_{\vtheta}} [r(\rvm, \rvp)]-\beta\mathbb{D}_{\text{KL}}(p_\vtheta(\rvm | \rvp)\Vert p_{\text{ref}}(\rvm | \rvp)),
\end{align}
where $p_\vtheta$ and $p_{\text{ref}}$ are the distributions induced by the fine-tuned and pre-trained models, respectively. In this work, $p_\vtheta$ and $p_{\text{ref}}$ are the fine-tuned and pre-trained molecule diffusion models, as introduced above. $\beta$ is a hyperparameter controlling the KL divergence regularization.
Note that, here the reward is a known black-box function, unlike typical RLHF where it is unknown and has to be estimated from preferences.
In the following section, we elaborate on how the alignment objective is rewritten with diffusion forward and reverse processes defined on atomic types and coordinates.

\subsection{Energy Preference Optimization}
\label{subsec:method-dpo}


Though the reward function is known, evaluating reward values such as binding affinity is computationally expensive and we instead resort to aligning with a labeled offline dataset.
We start with a dataset \( \mathcal{D} = \{ (\rvp, \rvm^w, \rvm^l) \} \) where $\rvp$ denotes the protein condition and \( \rvm^w \succ \rvm^l \) is a pair of winning and losing ligands with respect to certain specified energy, \eg, binding energy. 
The optimal solution to the RLHF objective from \Cref{eq:rlhf} can be written in closed-form $p^*_\theta (\rvm|\rvp) \propto p_\textnormal{ref}(\rvm|\rvp) \exp(\frac{1}{\beta} r(\rvm, \rvp))$~\citep{peters2007reinforcement}. Following the preference optimization algorithm~\citep{rafailov2023direct}, we use the Bradley Terry (BT, \citep{bradley1952rank}) model $p(\rvm^0_1 \succ \rvm^0_2 | \rvp)=\sigma(r(\rvm^0_1, \rvp)-r(\rvm^0_2, \rvp))$ to reformulate the RLHF objective as:
\begin{equation}
\label{eq:dpo}
    \mathcal{L}_{\text{DPO}}(\theta) = -\mathbb{E}_{(\rvp, \rvm^w, \rvm^l) \sim \mathcal{D}} \left[ \log \sigma \left( \beta \log \frac{p_\theta(\rvm_0^w | \rvp)}{p_{\text{ref}}(\rvm_0^w | \rvp)} - \beta \log \frac{p_\theta(\rvm_0^l | \rvp)}{p_{\text{ref}}(\rvm_0^l | \rvp)} \right) \right].
\end{equation}
Due to the intractability of $p_\theta(\rvm|\rvp)$ for diffusion models, 
we instead follow recent work on diffusion-based preference optimization~\citep{wallace2023diffusion} to align the whole reverse process
and utilize Jensen’s inequality to optimize its negative evidence lower bound optimization (ELBO):
\begin{equation}
    \mathcal{L}_{\text{DPO-Diffusion}}(\theta) = -\mathbb{E}_{(\rvp, \rvm_0^w, \rvm_0^l) \sim \mathcal{D}, (\rvm_{1:T}^w, \rvm_{1:T}^l)\sim p_\theta} \left[ \log \sigma \left( \beta \log \frac{p_\theta(\rvm_{0:T}^w)}{p_{\text{ref}}(\rvm_{0:T}^w)} - \beta \log \frac{p_\theta(\rvm_{0:T}^l)}{p_{\text{ref}}(\rvm_{0:T}^l)} \right) \right],
\end{equation}
where we omit the conditioning on the protein target $\rvp$ for compactness. We further approximate the reverse process $ p_\theta(\rvm_{1:T}|\rvm_0) $ with the forward process $ q(\rvm_{1:T}|\rvm_0) $ for efficient sampling of $\rvm_{1:T}$, and obtain the following expression after some derivations~\citep{wallace2023diffusion}:
\begin{equation}
\begin{aligned}
\label{eq:loss-dpo-diffusion}
    \tilde{\mathcal{L}}_{\text{DPO-Diffusion}}& (\theta) = -\mathbb{E}_{(\rvp, \rvm_0^w, \rvm_0^l) \sim \mathcal{D}, t\sim[0,T], \rvm_{t}^w \sim q, \rvm_{t}^l\sim q} \big[\\
    & \log \sigma \big( -\beta T \big( \mathbb{D}_\textnormal{KL}(q(\rvm_{t-1}^w|\rvm_{0,t}^w) \| p_\theta (\rvm_{t-1}^w|\rvm_t^w)) - \mathbb{D}_\textnormal{KL}(q(\rvm_{t-1}^w|\rvm_{0,t}^w) \| p_{\text{ref}}(\rvm_{t-1}^w|\rvm_{t}^w)) \\
    & - \mathbb{D}_\textnormal{KL}(q(\rvm_{t-1}^l|\rvm_{0,t}^l) \| p_\theta (\rvm_{t-1}^l|\rvm_{t}^l)) + \mathbb{D}_\textnormal{KL}(q(\rvm_{t-1}^l|\rvm_{0,t}^l) \| p_{\text{ref}}(\rvm_{t-1}^l|\rvm_{t}^l)) \big) \big) \big]
\end{aligned}
\end{equation}
Let $[\hat{\rvx}_0, \hat{\rvv}_0]$ be the predicted atom position and type, which are fed into \Cref{eq:diffusion-denoising} to obtain the posterior distributions. With the joint diffusion processes \Cref{eq:diffusion-forward-kernal,eq:diffusion-forward,eq:diffusion-denoising} on both continuous $\rvx$ and discrete $\rvv$ features, the above KL divergences can be decomposed and calculated as:
\begin{equation}
\begin{aligned}
    & \mathbb{D}_\textnormal{KL}(q(\rvm_{t-1}|\rvm_{0, t}) \| p(\rvm_{t-1}|\rvm_{t})) = \mathbb{D}_\textnormal{KL}^{\rvx,t-1}(q(\rvx_{t-1}|\rvx_{0, t}) \| p(\rvx_{t-1}|\rvx_{t})) + \mathbb{D}_\textnormal{KL}^{\rvv,t-1}(q(\bm{c}_{t-1}|\bm{c}_{0, t}) \| p(\bm{c}_{t-1}|\bm{c}_{t})), \\
    & \mathbb{D}_\textnormal{KL}^{\rvx,t-1}(q(\rvx_{t-1}|\rvx_{0, t}) \| p(\rvx_{t-1}|\rvx_{t})) = \frac{1}{\Tilde{\beta_t}} \| \Tilde{\bm\mu}(\rvx_t, \rvx_0) - \Tilde{\bm\mu}(\rvx_t, \hat{\rvx}_0)\|^2 + C = \gamma_t \|\rvx_0 - \hat\rvx_0 \|^2 + C,\\
    & \mathbb{D}_\textnormal{KL}^{\rvv,t-1}(q(\bm{c}_{t-1}|\bm{c}_{0, t}) \| p(\bm{c}_{t-1}|\bm{c}_{t})) = \sum_{k} \Tilde{\bm{c}}(\rvv_t, \rvv_0)_k \log \frac{\Tilde{\bm{c}}(\rvv_t, \rvv_0)_k}{\Tilde{\bm{c}}(\rvv_t, \hat\rvv_0)_k},
\end{aligned}
\end{equation}
where $\gamma_t = \frac{\bar\alpha_{t-1} \beta_t^2}{2\sigma_t^2(1 - \bar\alpha_t)^2}$ and $C$ is a constant. Let $\hat{\rvx}_{0,\theta}, \hat{\rvv}_{0,\theta} $ and $\hat{\rvx}_{0,\textnormal{ref}}, \hat{\rvv}_{0,\textnormal{ref}}$ be the predictions from the fine-tuned and from the original pretrained model, respectively. Then, we can further obtain the preference optimization loss on $\rvx$ and $\rvv$, respectively, as follows:
\begin{align}
    & \mathcal{L}_{t-1}^\rvx(\theta) = -\mathbb{E} \big[ \log \sigma \big( -\beta T \gamma_t ( ||\rvx_0^w - \hat{\rvx}_{0,\theta}^w||^2 - ||\rvx_0^w - \hat{\rvx}_{0,\textnormal{ref}}^w||^2  - ||\rvx_0^l - \hat{\rvx}_{0,\theta}^l||^2 + ||\rvx_0^l - \hat{\rvx}_{0,\textnormal{ref}}^l||^2 ) \big) \big] \notag\\
    & \mathcal{L}_{t-1}^\rvv(\theta) = -\mathbb{E} \big[ \log \sigma \big( -\beta T \big( \mathbb{D}_\textnormal{KL}(\Tilde{\bm{c}}(\rvv_t^w, \rvv_0^w) || \Tilde{\bm{c}}(\rvv_t^w, \hat{\rvv}^w_{0,\theta})) - \mathbb{D}_\textnormal{KL}(\Tilde{\bm{c}}(\rvv_t^w, \rvv_0^w) || \Tilde{\bm{c}}(\rvv_t^w, \hat{\rvv}^w_{0,\textnormal{ref}})) \nonumber \\
    & \quad \quad \quad \quad \quad \quad - \mathbb{D}_\textnormal{KL}(\Tilde{\bm{c}}(\rvv_t^l, \rvv_0^l) || \Tilde{\bm{c}}(\rvv_t^l, \hat{\rvv}_{0,\theta}^l)) + \mathbb{D}_\textnormal{KL}(\Tilde{\bm{c}}(\rvv_t^l, \rvv_0^l) || \Tilde{\bm{c}}(\rvv_t^l, \hat{\rvv}^l_{0,\textnormal{ref}})) \big) \big) \big]
\end{align}
With Jensen's inequality and the convexity of $-\log \sigma$, we can derive the final objective as a (weighted) sum of atom coordinate and type preference losses $\gL_{t-1}^{\rvx} + \gL_{t-1}^{\rvv}$, which turns the sum of the KL terms outside $-\log \sigma$ and serves as an upper bound of \Cref{eq:loss-dpo-diffusion}:
\begin{equation}
\label{eq:loss-alidiff}
    \mathcal{L}_\method(\theta) =  -\mathbb{E}_{(\rvp, \rvm_0^w, \rvm_0^l) \sim \mathcal{D}, t\sim[0,T], \rvm_{t}^w \sim q, \rvm_{t}^l\sim q} \big[ \gL_{t-1}^{\rvx} + \gL_{t-1}^{\rvv} \big] \geq \tilde{\mathcal{L}}_{\text{DPO-Diffusion}}(\theta),
\end{equation}
where the preference is assigned separately to atom types $\rvv$ and coordinates $\rvx$. The loss decomposition imposes a fine-grained preference assignment on chemical elements and geometric strcutures and enables us to choose weights to balance the training of the two variables~\citep{guan20233d,guan2024decompdiff}.
The overall training and sampling algorithms of \method are summarized in \Cref{app:sec:alg}. 


\subsection{Exact Energy Optimization}
\label{subsec:method-epo}


Although DPO enjoys the advantage of efficient fine-tuning without fitting a reward function, recent theoretical investigations reveal that it is highly vulnerable to overfitting by pushing all the probability mass on the winning sample~\citep{azar2024general}.
Specifically, the non-linear transformation $\log \sigma$ of \Cref{eq:dpo} pushes the $\log p_\theta(\rvm^w|\rvp) - \log p_\theta(\rvm^l|\rvp)$ towards infinity, completely removing the likelihood for the losing sample regardless of any regularization in the original RLHF setup \Cref{eq:rlhf}~\citep{azar2024general,tang2024generalized}.
Let us analyze the problem with an example consisting of two ligand molecules $\rvm^w$ and $\rvm^l$ with their rewards measured as $\rvr^w$ and $\rvr^l$ (\eg, calculated from binding energy).
The DPO objective in \Cref{eq:loss-alidiff} tends to just greedily maximize towards $p(\rvm^w \succ \rvm^l | \rvp) \rightarrow 1$. However, the optimal preference probability can be calculated by the BT model~\citep{bradley1952rank} as $\hat{p}(\rvm^w \succ \rvm^l | \rvp)=\sigma(\rvr^w-\rr^l)$, and our alignment goal is to shift the distribution to align with this $\hat{p}$ instead of greedy maximization. 
To address the over-optimization issue, we introduce an improved objective with regularization on preference maximization, named Exact Energy Preference Optimization (E$^2$PO).
Let $\bar{\mathcal{L}}_t^\rvx(\theta)$ and $\bar{\mathcal{L}}_t^\rvv(\theta)$ denote terms for reverse preference optimization:
\begin{align}
    \bar{\mathcal{L}}_{t-1}^\rvx(\theta) = 1 - {\mathcal{L}}_{t-1}^\rvx(\theta), \quad \bar{\mathcal{L}}_{t-1}^\rvv(\theta) = 1-{\mathcal{L}}_{t-1}^\rvv(\theta).
\end{align}
Our E$^2$PO objective function takes a cross-entropy form to align the distributions $p_\theta(\rvm^w \succ \rvm^l | \rvp)$ towards $\hat{p}(\rvm^w \succ \rvm^l | \rvp)$. 
Formally, it is given by:
\begin{equation}
\begin{aligned}
\label{eq:loss-alidiff-exact}
    \mathcal{L}_\text{\method-E$^2$PO}(\theta) =  -&\mathbb{E}_{(\rvp, \rvm_0^w, \rvm_0^l) \sim \mathcal{D}, t\sim[0,T], \rvm_{t}^w \sim q, \rvm_{t}^l\sim q} \big[ \\
    &\quad \quad (\sigma (\rvr^w -\rvr^l)) (\gL_{t-1}^{\rvx} + \gL_{t-1}^{\rvv} ) + (1-\sigma (\rvr^w -\rvr^l)) (\bar{\gL}_{t-1}^{\rvx} + \bar{\gL}_{t-1}^{\rvv} ) \big],
\end{aligned}
\end{equation}
where the second term $\bar{\gL}_{t-1}^{\rvx} + \bar{\gL}_{t-1}^{\rvv}$ weighted by $1 - \sigma (\rvr^w -\rvr^l)$ helps to alleviate the overfitting on the winning data sample. Notably, for $\rvr^w >> \rvr^l$, we have $\sigma(\rvr^w - \rvr^l)\approx 1$, indicating that the regularized objective in \Cref{eq:loss-alidiff-exact} will still change back to the original objective in \Cref{eq:loss-alidiff}, where overfitting on the extremely better data is expected. In principle, with the regularization objective, we have:
\begin{theorem}
\label{theorem-kl}
    The objective function in \Cref{eq:loss-alidiff-exact} optimizes a variational upper bound of the KL-divergence $\mathbb{D}_\textnormal{KL}\big(\hat{p}^*(\rvm|\rvp)||\hat{p}_\theta(\rvm|\rvp)\big)$, where $\hat{p}^*(\rvm|\rvp) \propto p_\textnormal{ref}(\rvm|\rvp) \exp(r(\rvm, \rvp))$ and $\hat{p}_\theta(\rvm|\rvp) \propto p_\textnormal{ref} (\rvm|\rvp) \left(\frac{p_\theta (\rvm|\rvp)}{p_\textnormal{ref}(\rvm|\rvp)}\right)^\beta $.
\end{theorem}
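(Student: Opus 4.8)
The plan is to read the two comparison distributions off the DPO implicit reward and then connect the loss to a Bregman divergence. Observe first that $\hat p_\theta(\rvm|\rvp)\propto p_\textnormal{ref}(\rvm|\rvp)\,(p_\theta/p_\textnormal{ref})^\beta = p_\textnormal{ref}(\rvm|\rvp)\exp(\hat r_\theta(\rvm,\rvp))$ with implicit reward $\hat r_\theta(\rvm,\rvp):=\beta\log\frac{p_\theta(\rvm|\rvp)}{p_\textnormal{ref}(\rvm|\rvp)}$, while $\hat p^*(\rvm|\rvp)\propto p_\textnormal{ref}(\rvm|\rvp)\exp(r(\rvm,\rvp))$. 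Both are exponential tilts of the same base $p_\textnormal{ref}$, so proving the theorem amounts to showing that minimizing the objective in \Cref{eq:loss-alidiff-exact} drives $\hat r_\theta$ toward $r$ (up to an additive constant absorbed by normalization). I would organize this in three moves: (i) recognize the loss as a cross-entropy between Bradley--Terry preference models; (ii) bound the intractable sequence likelihood by the diffusion ELBO; and (iii) lift pairwise preference matching to the distribution-level KL.

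\textbf{Cross-entropy identity and the ELBO bound.} Under the BT model the target preference is $\hat p(\rvm^w\succ\rvm^l|\rvp)=\sigma(r^w-r^l)$ and the model's implied preference is $p_\theta(\rvm^w\succ\rvm^l|\rvp)=\sigma(\Delta_\theta)$ with $\Delta_\theta=\hat r_\theta^w-\hat r_\theta^l$. I would first check that the summand $\sigma(r^w-r^l)(\gL_{t-1}^\rvx+\gL_{t-1}^\rvv)+(1-\sigma(r^w-r^l))(\bar\gL_{t-1}^\rvx+\bar\gL_{t-1}^\rvv)$ is exactly the binary cross-entropy $-\hat p\log\sigma(\Delta_\theta)-(1-\hat p)\log\sigma(-\Delta_\theta)$ between these two Bernoulli distributions, with the complementary weight carried by $\bar\gL=1-\gL$. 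Then I would use the decomposition $\mathrm{CE}(\hat p,p_\theta)=H(\hat p)+\mathbb{D}_\textnormal{KL}(\mathrm{Ber}(\hat p)\,\|\,\mathrm{Ber}(\sigma(\Delta_\theta)))$, where the entropy $H(\hat p)$ depends only on the fixed rewards, so up to a $\theta$-independent constant the loss equals the expected preference-level KL. Because $p_\theta(\rvm_0|\rvp)$ is intractable, the per-timestep terms $\gL_{t-1}^\rvx,\gL_{t-1}^\rvv$ already replace $-\log\sigma(\Delta_\theta)$ by a KL surrogate; reusing the Jensen/convexity argument that yields $\mathcal{L}_\method(\theta)\ge\tilde{\mathcal{L}}_\textnormal{DPO-Diffusion}(\theta)$ in \Cref{eq:loss-alidiff}, the reweighted sum is an upper bound of the corresponding sequence-level cross-entropy written with the true marginals. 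This supplies the ``variational'' part of the statement.

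\textbf{Lifting to the distribution KL.} The remaining task is to turn pairwise preference matching into the claimed divergence. Writing $A(\phi):=\log\mathbb{E}_{p_\textnormal{ref}}[e^{\phi(\rvm)}]$ for the convex log-partition functional, a direct computation gives the exponential-family identity $\mathbb{D}_\textnormal{KL}(\hat p^*\,\|\,\hat p_\theta)=A(\hat r_\theta)-A(r)-\langle\nabla A(r),\,\hat r_\theta-r\rangle$, i.e. the Bregman divergence of $A$ between $r$ and $\hat r_\theta$, which vanishes exactly when $\hat r_\theta=r+\mathrm{const}$. Since the preference-level KL penalizes precisely the pairwise discrepancies $(\hat r_\theta^w-\hat r_\theta^l)-(r^w-r^l)=(\hat r_\theta-r)(\rvm^w)-(\hat r_\theta-r)(\rvm^l)$, I would bound this Bregman divergence by the expected pairwise discrepancies, concluding that the expected preference-level KL upper-bounds $\mathbb{D}_\textnormal{KL}(\hat p^*\,\|\,\hat p_\theta)$ and hence, chaining with the ELBO, that \Cref{eq:loss-alidiff-exact} upper-bounds the target KL up to a $\theta$-independent constant.

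\textbf{Main obstacle.} I expect the third move to be the crux. The loss is assembled from local, pairwise quantities sampled from $\gD$, whereas $\mathbb{D}_\textnormal{KL}(\hat p^*\,\|\,\hat p_\theta)$ is a global functional of the tilted densities; bridging them needs both the Bregman identity above and an idealizing coverage assumption, so that matching reward differences on sampled pairs controls the functional gap $\hat r_\theta-r$ over the support of $p_\textnormal{ref}$. By contrast, the cross-entropy identity and the ELBO bound are essentially bookkeeping that reuses the derivation already established for \Cref{eq:loss-alidiff}. A secondary subtlety is the heuristic $\bar\gL=1-\gL$ standing in for $-\log\sigma(-\Delta_\theta)$, which I would either justify in the relevant regime or absorb into the bound rather than treat as an exact equality.
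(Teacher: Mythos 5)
Your first two moves line up with the paper's own argument: the paper also splits the proof into (a) a Jensen/convexity step showing that the timestep-wise loss in \Cref{eq:loss-alidiff-exact} upper-bounds a sequence-level reward-weighted cross-entropy over the log-density ratios (their Lemma~\ref{app:lemma:bound}, borrowed from Diffusion-DPO), and (b) an identification of that cross-entropy with $\mathbb{D}_\textnormal{KL}(\hat{p}^*\|\hat{p}_\theta)$ up to a $\theta$-independent constant (their Lemma~\ref{app:lemma:exact}). Your reading of the weighted loss as a Bernoulli cross-entropy between $\sigma(r^w-r^l)$ and $\sigma(\Delta_\theta)$, and your observation that both $\hat p^*$ and $\hat p_\theta$ are exponential tilts of $p_\textnormal{ref}$, are exactly the right setup, and your flag that $\bar{\gL}=1-\gL$ is only a heuristic surrogate for $-\log\sigma(-\Delta_\theta)$ is a legitimate criticism of the paper itself.

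The gap is in your third move, which is the crux and which you do not actually carry out. You assert that the expected preference-level (Bernoulli) KL upper-bounds the Bregman divergence $A(\hat r_\theta)-A(r)-\langle\nabla A(r),\hat r_\theta-r\rangle$, but no such inequality holds in general: the pairwise terms only see differences $(\hat r_\theta-r)(\rvm^w)-(\hat r_\theta-r)(\rvm^l)$ on pairs drawn from $\gD$, while the global KL depends on the full functional $\hat r_\theta-r$ through the log-partition $A$, and matching reward differences on a finite set of pairs does not control $A(\hat r_\theta)-A(r)$ without quantitative coverage and curvature assumptions you never state. The paper closes this hole by a different, more concrete device: it rewrites the pairwise cross-entropy as a two-sample softmax $-\sum_{i=1}^2\frac{e^{r(\rvp,\rvm_i)}}{\sum_j e^{r(\rvp,\rvm_j)}}\log\frac{\hat p_\theta(\rvm_i|\rvp)/p_\textnormal{ref}(\rvm_i|\rvp)}{\sum_j \hat p_\theta(\rvm_j|\rvp)/p_\textnormal{ref}(\rvm_j|\rvp)}$ and then invokes the approximation that the two samples carry essentially all of $p_\textnormal{ref}(\cdot|\rvp)$'s mass, so that $\sum_{j=1}^2\hat p_\theta(\rvm_j|\rvp)/p_\textnormal{ref}(\rvm_j|\rvp)\approx 2$ and $\sum_{j=1}^2 e^{r(\rvp,\rvm_j)}\approx 2Z(\rvp)$; with the empirical normalizers replaced by the true partition functions the expression collapses \emph{identically} to $\mathbb{D}_\textnormal{KL}(\hat p^*(\cdot|\rvp)\|\hat p_\theta(\cdot|\rvp))$ plus a constant, with no inequality needed at this stage. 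So where you hoped for a bound that lifts pairwise matching to the global KL, the paper instead makes an explicit mass-concentration assumption that turns the identification into an (approximate) equality; without either that assumption or a proved version of your Bregman bound, your argument does not reach the stated conclusion.
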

The theorem provides an analytical guarantee for the optimal shifted distribution after alignment that avoids over-optimization. Assuming we achieve convergence on the KL divergence, we have that $p_\textnormal{ref}(\rvm|\rvp) \exp(r(\rvm, \rvp)) \propto p_\textnormal{ref}^{1-\beta} (\rvm|\rvp) p_\theta^\beta (\rvm|\rvp) $ which further gives us $ p_\theta (\rvm|\rvp) \propto p_\textnormal{ref}(\rvm|\rvp) \exp(\frac{1}{\beta}r(\rvm, \rvp))$, where a smaller $\beta$ encourages a sharper shift towards the user-defined reward function. 
We give the full derivations in \Cref{app:sec:proof}, and analyze the empirical effect on generation quality in \Cref{sec:exp}.

\section{Experiment}
\label{sec:exp}

\subsection{Experiment Setup}
\textbf{Dataset}. We train and evaluate \method using the CrossDocked2020 dataset~\citep{francoeur2020three}. Following the common setup in this field~\citep{luo20213d,guan20233d}, we refined the initial 22.5 million docked protein binding complexes by selecting docking poses with RMSD lower than 1Å with the ground truth and diversifying proteins with a sequence identity below 30\%. To apply \method, we further preprocess our data and construct a dataset of the form $\gD = \{(\rvp, \rvm^w, \rvm^l)\}$, where $\rvp$ denotes the protein, $\rvm{^w}$ denotes the preferred molecules, and $\rvm^{l}$ denotes rejected molecules based on the user-defined reward. In our setting, we choose two ligand molecules per pocket site and label the preference by a certain reward, \eg binding energy for our main benchmark. We provide ablations with more reward functions in \Cref{subsec:exp-ablation}.
Details of preference pair selection are presented in \Cref{app:sec:ablation}. The final dataset uses a train and test split of 65K and 100. 

\textbf{Baselines}. We compare our model with the following baselines: liGAN~\citep{ragoza2022generating} is a conditional VAE model that utilizes a 3D CNN architecture to both encode and generate voxelized representations of atomic densities; AR~\citep{luo20213d}, Pocket2Mol~\citep{peng2022pocket2mol} and GraphBP~\citep{liu2022generating} are autoregressive models that learn graph neural networks to generate 3D molecules atom by atom sequentially; TargetDiff~\citep{guan20233d} and DecompDiff~\citep{guan2024decompdiff} are diffusion-based approaches for generating atomic coordinates and types via a joint denoising process; IPDiff~\citep{huang2023protein} is the most recent state-of-the-art diffusion-based approach that further integrates the interactions between the target protein and the molecular ligand into the generation process. 

\textbf{Evaluation metrics.} We evaluate the generated molecules by comparing \textit{binding affinity} with the target and critical \textit{molecular properties}. We analyze the generated molecules across 100 test proteins, reporting the mean and median for affinity-based metrics (Vina Score, Vina Min, Vina Dock, and High Affinity) and molecular property metrics (drug-likeness QED~\citep{bickerton2012quantifying}, synthesizability SA~\citep{ertl2009estimation}, and diversity). 
We use AutoDock Vina~\citep{eberhardt2021autodock} to estimate binding affinity scores, using the common setup described by \citet{luo20213d,ragoza2022generating}.
Specifically, Vina Score estimates binding affinity from the generated 3D structures, Vina Min refines the structure through local minimization before estimation, Vina Dock uses a re-docking procedure to reflect the optimal binding affinity, and High Affinity gauges the percentage of generated molecules that bind better than reference molecules per protein.


\subsection{Results}

\begin{table}[!t]
\caption{Summary of binding affinity and molecular properties of reference molecules and molecules generated by \method and baselines. (↑) / (↓) denotes whether a larger / smaller number is preferred. Top 2 results are bolded and underlined, respectively.}
\label{table:main}
\centering
{\resizebox{\textwidth}{!}{
\begin{tabular}{l|cc|cc|cc|cc|cc|cc|cc}
\toprule
\textbf{Methods} & \multicolumn{2}{c|}{\textbf{Vina Score (↓)}} & \multicolumn{2}{c|}{\textbf{Vina Min (↓)}} & \multicolumn{2}{c|}{\textbf{Vina Dock (↓)}} & \multicolumn{2}{c|}{\textbf{High Affinity(↑)}}   & \multicolumn{2}{c|}{\textbf{QED(↑)}}  & \multicolumn{2}{c|}{\textbf{SA(↑)}} & \multicolumn{2}{c}{\textbf{Diversity(↑)}} \\
& Avg. & Med. & Avg. & Med. & Avg. & Med. & Avg. & Med. & Avg. & Med. & Avg. & Med. & Avg. & Med.\\
\midrule
liGAN* & - & - & - & - & -6.33 & -6.20 & 21.1\% & 11.1\%  & 0.39 & 0.39 & 0.59 & 0.57 & 0.66 & 0.67 \\
GraphBP* & - & - & - & - & -4.80 & -4.70 & 14.2\% & 6.7\% & 0.43 & 0.45 & 0.49 & 0.48 & \textbf{0.79} & \textbf{0.78} \\
AR & -5.75 & -5.64 & -6.18 & -5.88 & -6.75 & -6.62  & 37.9\% & 31.0\%  & 0.51 & 0.50 & \underline{0.63} & \underline{0.63} &0.70 & 0.70 \\
Pocket2Mol & -5.14 & -4.70 & -6.42 & -5.82 & -7.15 & -6.79 & 48.4\% & 51.0\% & \textbf{0.56} & \textbf{0.57} & \textbf{0.74} & \textbf{0.75} & 0.69 & 0.71 \\
TargetDiff & -5.47 & -6.30 & -6.64 & -6.83 & -7.80 & -7.91 & 58.1\% & 59.1\% & 0.48 & 0.48 & 0.58 & 0.58 & 0.72 & 0.71 \\
DecompDiff & -5.67 & -6.04 & -7.04 & -7.09 & -8.39 & -8.43 & 64.4\% & 71.0\% & 0.45 & 0.43 & 0.61 & 0.60 & 0.68 & 0.68 \\
IPDiff & \underline{-6.42} & \underline{-7.01} & \underline{-7.45} & \underline{-7.48} & \underline{-8.57} & \underline{-8.51} & \underline{69.5\%} & \underline{75.5\%} & \underline{0.52} & \underline{0.53} & 0.61 & 0.59 & \underline{0.74} & \underline{0.73} \\
\textbf{\method} & \textbf{-7.07} & \textbf{-7.95} & \textbf{-8.09} & \textbf{-8.17} & \textbf{-8.90} & \textbf{-8.81}  & \textbf{73.4\% } & \textbf{81.4\%} & 0.50 & 0.50 & 0.57 & 0.56 & 0.73 & 0.71 \\
Reference & -6.36 & -6.41 & -6.71 & -6.49 & -7.45 & -7.26 & - & - & 0.48 & 0.47 & 0.73 & 0.74 & - & - \\
\bottomrule
\end{tabular}
}}
\end{table}

\begin{figure*}[!t]
    \centering
    \includegraphics[width=\linewidth]{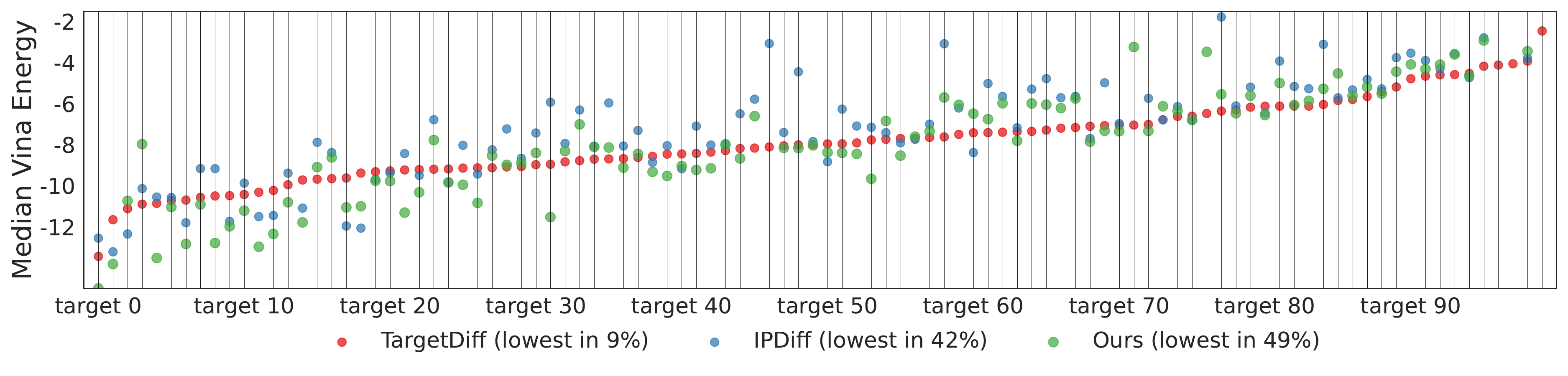}
     \vspace{-20pt}
    \caption{Median Vina energy for different generated molecules (TargetDiff, IPDiff, \method) across 100
testing samples, sorted by the median Vina energy of molecules generated from \method.
}
    \label{fig:binding}
\end{figure*}

\begin{figure*}[!b]
    \centering
    \includegraphics[width=\linewidth]{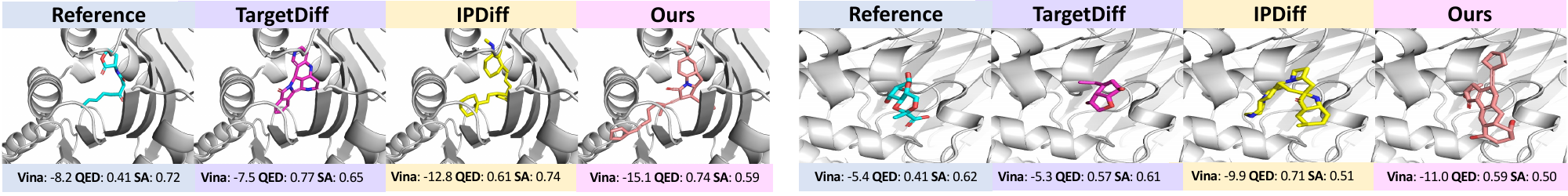}
    \vspace{-10pt}
    \caption{Visualizations of reference molecules and generated ligands for protein pockets (\texttt{1l3l}, \texttt{2e24}) generated by TargetDiff, IPDiff, and \method. Vina score, QED, and SA are reported below. }
    \label{fig:sample_vis}
\end{figure*}

\textbf{Binding Affinity and Molecular Properties.}
We compare the performance of our proposed method \method against the above baseline methods. Our model is fine-tuned from IPDiff, the ligand generative model. We report the results in \Cref{table:main}, and leave more implementation details in \Cref{app:sec:implementation}. As shown in the results, \method significantly outperforms all non-diffusion-based models in binding-related metrics, and also surpasses our base model IPDiff in all binding affinity related metrics by a notable margin.  
In particular, \method increases the binding-related
metrics Avg.~Vina Score, Vina Min, and Vina Dock by 10.1\%, 8.56\%, and 3.9\% compared with IPDiff. Our superior performance in binding-related metrics demonstrates the effectiveness of energy preference optimization. \Cref{fig:binding} shows the median Vina energy of the proposed model, compared with TargetDiff and IPDiff, two diffusion-based state-of-the-art models in target-aware molecule generation. We observe that \method surpasses these baseline models and generates molecules with the highest binding affinity for 49\% of the protein targets in the test set. In property-related metrics, we observe only a slight decrease in QED, SA, and diversity, compared with IPDiff. Specifically, with approximately 10.1\% improvement on Avg.~Vina Score, we observe a minor decrease in Avg.~SA (-6.5\%), Avg.~QED (-3.8\%), and diversity(-1.4\%). Figure~\ref{fig:sample_vis} presents examples of ligand molecules generated by \method, TargetDiff, and IPDiff. The figure shows that our generated molecules maintain reasonable structures and high binding affinity compared with all baselines, indicating their potential as promising candidate ligands. Additional experimental results and visualized examples of these molecules are in \Cref{app:sec:ablation,app:sec:vis}. 
 
We also notice a trade-off between binding affinity and property-related metrics. While we achieve state-of-the-art performance on all binding affinity metrics, the performance on QED and SA metrics slightly decreases. This phenomenon has been commonly observed in previous studies where achieving high binding affinity can often sacrifice other molecular metrics~\citep{guan20233d,huang2023protein}. This is because the highest affinity can potentially only be achieved by rather specific and unique molecules, which are harder to synthesize than simple molecules, and hence these trade-offs are expected. Besides, in real-world drug discovery, binding affinity is typically a more critical metric as molecules with more stable interaction with the pocket site are important, whereas QED and SA work mainly as rough filters~\citep{guan20233d}. For these reasons, we believe the deterioration in molecular properties is well compensated by the improvement in binding affinity, especially with such little deterioration in property metrics. 
In addition, in the following section (\Cref{table:exp:multiple}), we further discuss incorporating molecular properties into the reward, which shows slightly lower performance gain on affinity but archives improvements also on molecular properties.

\subsection{Ablation Studies}
\label{subsec:exp-ablation}

\begin{table}[!t]
\caption{Effect of combining multiple reward objectives. Affinity denotes \method, whereas Affinity+SA denotes combining both synthetic accessibility and affinity as reward function. }
\label{table:exp:multiple}
\centering
{\resizebox{\textwidth}{!}{
\begin{tabular}{l|cc|cc|cc|cc|cc|cc|cc}
\toprule
\textbf{Choice of reward} & \multicolumn{2}{c|}{\textbf{Vina Score (↓)}} & \multicolumn{2}{c|}{\textbf{Vina Min (↓)}} & \multicolumn{2}{c|}{\textbf{Vina Dock (↓)}} & \multicolumn{2}{c|}{\textbf{High Affinity(↑)}}   & \multicolumn{2}{c|}{\textbf{QED(↑)}}  & \multicolumn{2}{c|}{\textbf{SA(↑)}} & \multicolumn{2}{c}{\textbf{Diversity(↑)}} \\
& Avg. & Med. & Avg. & Med. & Avg. & Med. & Avg. & Med. & Avg. & Med. & Avg. & Med. & Avg. & Med.\\
\midrule
Affinity & -7.07 & -7.95 & \textbf{-8.09} & \textbf{-8.17} & \textbf{-8.90} & \textbf{-8.81}  & \textbf{73.4\% } & \textbf{81.4\%} & 0.50 & 0.50 & 0.57 & 0.56 & 0.73 & 0.71 \\
Affinity+SA & -6.87 & -7.76 & -8.00 & -8.08 & -8.81 & -8.72 & 72.7\% & 80.8\% & \textbf{0.52} & \textbf{0.55 } & \textbf{0.60} & \textbf{0.59} & \textbf{0.74 }& \textbf{0.73 } \\
Affinity+QED & \textbf{-7.11} & \textbf{-8.02} & -8.01 & -7.99 & -8.17 & -8.72 & 73.7\% & 82.0\% & 0.51 & 0.52 & 0.57 & 0.57 & 0.73 & \textbf{0.73} \\
\bottomrule
\end{tabular}
}}
\end{table}

\begin{table}[!t]
\caption{Comparison of DPO and E$^2$PO with pretrained and supervised fine-tuned models. \method with DPO takes energy ranking, and with E$^2$PO uses exact energy for preference optimization.}

\label{table:exp:abalation}
\centering
{\resizebox{\textwidth}{!}{
\begin{tabular}{l|cc|cc|cc|cc|cc|cc|cc}
\toprule
\textbf{Methods} & \multicolumn{2}{c|}{\textbf{Vina Score (↓)}} & \multicolumn{2}{c|}{\textbf{Vina Min (↓)}} & \multicolumn{2}{c|}{\textbf{Vina Dock (↓)}} & \multicolumn{2}{c|}{\textbf{High Affinity(↑)}}   & \multicolumn{2}{c|}{\textbf{QED(↑)}}  & \multicolumn{2}{c|}{\textbf{SA(↑)}} & \multicolumn{2}{c}{\textbf{Diversity(↑)}} \\
& Avg. & Med. & Avg. & Med. & Avg. & Med. & Avg. & Med. & Avg. & Med. & Avg. & Med. & Avg. & Med.\\
\midrule
IPDiff & -6.42 & -7.01 & -7.45 & -7.48 & -8.57 & -8.51 & 69.5\% & 75.5\% & \textbf{0.52} & \textbf{0.53} & \textbf{0.61} & \textbf{0.59} & \textbf{0.74} & \textbf{0.73} \\
$\text{IPDiff}_{\textsc{sft}}$ & -6.53 & -6.62 & -7.27 & -7.09 & -8.14 & -8.09 & 67.5\% & 72.5\% & 0.48 & 0.48 & \textbf{0.61} & \textbf{0.59} & 0.72 & 0.69 \\
\textbf{\method-DPO} & \underline{-6.81}  & \underline{-7.62} & \underline{-7.75} &  \underline{-7.79} &  \underline{-8.58} &  \underline{-8.55 } & \underline{69.7\%}  & \underline{71.1\%} & 0.50 & 0.51 & 0.56 & 0.56  & \textbf{0.74} & \underline{0.72} \\
\textbf{\method-E$^2$PO} & \textbf{-7.07} & \textbf{-7.95} & \textbf{-8.09} & \textbf{-8.17} & \textbf{-8.90} & \textbf{-8.81}  & \textbf{73.4\%} & \textbf{81.4\%} & 0.50 & 0.50 & \underline{0.57} & \underline{0.56} & \underline{0.73 } & 0.71 \\

\bottomrule
\end{tabular}
}}
\end{table}

\textbf{Effect of reward objectives.} To further explore the potential of \method, we evaluate the effect of combining optimization objectives ($\rvr = \rvr_{\text{affinity}} + \rvr_{\text{SA}}$; $  \rvr = \rvr_{\text{affinity}} + \rvr_{\text{QED}}$)  and investigate whether such a combined reward function can lead to better molecular properties to counter the trade-off we discussed before. As shown in \Cref{table:exp:multiple}, the results indicate that finetuning solely with binding affinity apparently achieves better performance in terms of binding affinity metrics. However, \method-Affinity+SA generates compounds with better drug-likeness (QED) and synthetic accessibility (SA). Both models exhibit similar performance in terms of structural diversity. This suggests that while \method-Affinity is superior for binding affinity, incorporating synthetic accessibility considerations (Affinity + SA) results in compounds that are more drug-like and easier to synthesize, enabling more efficient multi-objective drug development. Moreover, \method-Affinity+QED achieves better binding affinity compared with \method-Affinity, while the improvement in QED is relatively minimal. Thus, balancing these objectives highlights the potential for overcoming trade-offs in molecular optimization.

\textbf{Comparison with Supervised Fine-Tuning.} Supervised Fine-Tuning (SFT) serves as an alternative method for generating molecules with user-defined optimization objectives. We select the top 50\% protein-ligand samples with higher quality in user-defined reward from the training dataset and fine-tune the baseline model with the same training and sampling setting. The results in \Cref{table:exp:abalation} show that SFT did not show improvement over the baseline, and \method demonstrates significantly superior results compared to SFT. 
 
\textbf{Effect of preference optimization methods.}
As discussed in \Cref{subsec:method-epo}, the original DPO objective is vulnerable to overfitting and we propose to avoid it with regularization by weighting preference losses with the user-defined rewards.
We compare the direct use of energy preference optimization by ranking molecule pairs (\method-DPO) and exact energy optimization with user-defined reward function (\method-E$^2$PO) in \Cref{table:exp:abalation}.
The results show that \method-E$^2$PO achieves superior performance over \method-DPO in binding affinity metrics (Vina Score, Vina Min, Vina Dock) while maintaining competitive scores in QED, SA and diversity. In terms of drug-likeness and structural diversity, \method-E$^2$PO performs competitively, indicating that while it prioritizes binding affinity, it still maintains favorable drug-like properties and diversity. This further supports our previous hypothesis regarding the trade-off between binding affinity and molecular properties. An additional ablation study on the effect of exact energy optimization is presented in \Cref{app:sec:ablation}.  

\begin{wraptable}{r}[5pt]{0.4\textwidth} 
\vskip -0.2in
\centering
\setlength{\tabcolsep}{4pt} 
\small
\caption{Finetuning TargetDiff with \method. \method-T denotes our finetuned model with the same reward objective on TargetDiff.}
\label{tab:targetdiff}
\begin{tabular}{l|cc|cc}
\toprule
\textbf{Metric} & \multicolumn{2}{c}{\textbf{TargetDiff}} & \multicolumn{2}{c}{\textbf{\method-T}} \\
& Avg. & Med. & Avg. & Med. \\
\midrule
Vina Score & -5.47& -6.30 & \textbf{-5.81} & \textbf{-6.51} \\
Vina Min & -6.64 & -6.83 & \textbf{-6.94} & \textbf{-7.01} \\
Vina Dock & -7.80 & -7.91 & \textbf{-7.92} & \textbf{-7.97} \\
QED    & 0.48 & 0.48   & \textbf{0.56} & \textbf{0.56} \\
SA     & 0.58 & 0.58   & \textbf{0.62} & \textbf{0.60} \\
Diversity  & 0.72& 0.71   & \textbf{0.74 }& \textbf{0.75} \\
\bottomrule
\end{tabular}
\end{wraptable}
\textbf{General applicability to ligand diffusion models.} We further justify the general applicability of the proposed approach by finetuning another model, TargetDiff~\citep{guan20233d}, with exact energy optimization (\method-T), As shown in \Cref{tab:targetdiff}, \method-T surpasses TargetDiff on all binding affinity and molecular properties, with a 6.2\%, 16.6\%,  6.9\%, 2.8\% increase in Avg.~Vina Score, QED, SA, and diversity, respectively. The results further justify that our approach is generally applicable to diffusion-based SBDD models. 
Notably, \method-T archives even better QED and SA compared with \method, which allows users to choose the model based on the specific purpose for molecular properties. Also, we notice the percentage of improvement of binding affinity from TargetDiff to \method-T is slightly lower than that from IPDiff to \method. This can be explained as preference optimization is more effective when the model distribution is more similar to the preference data distribution, and IPDiff is shown to fit CrossDocked data better than TargetDiff~\citep{huang2023protein}.

\begin{figure*}[!t]
    \centering
     \vspace{-10pt}
    \includegraphics[width=\linewidth]{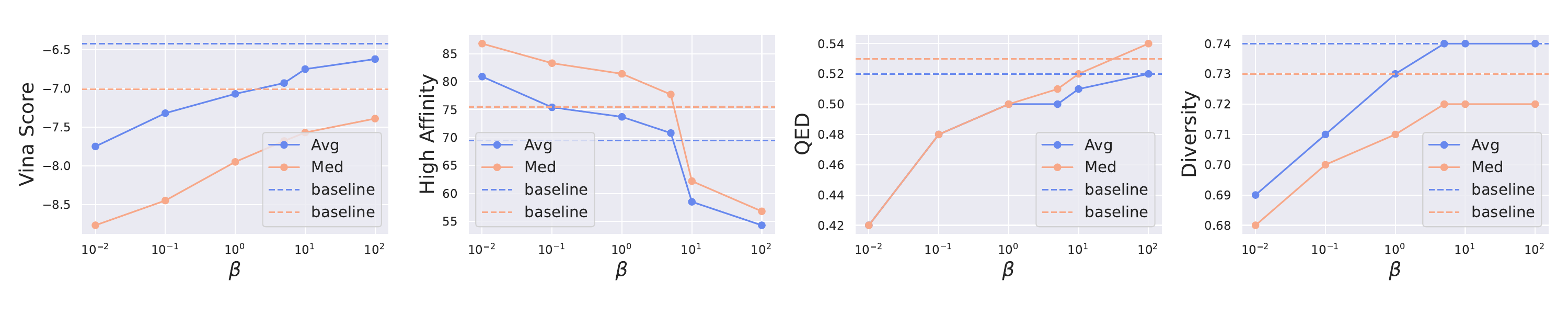}
     \vspace{-20pt}
    \caption{Ablation analysis of \method under different $\beta$. Vina Score, High Affinity, QED, and diversity are reported, where blue lines represent \method-DPO, and orange lines represent \method. The dotted lines represent the baseline IPDiff. }
    \label{fig:beta}
     \vspace{-5pt}
\end{figure*}




\textbf{Strength of $\beta$.}
We evaluate molecules generated by \method trained with varying $\beta$ values in \Cref{fig:beta}. Recall that $\beta$ influences the scale of energy preference optimization and regularization with respect to the reference model. The results indicate a clear trade-off between binding affinity and molecular properties with varying $\beta$. Lower $\beta$ values (e.g., 0.01) significantly enhance binding affinity metrics (Vina Score, Vina Min, Vina Dock), but at the cost of lower drug-likeness (QED) and diversity. Conversely, higher $\beta$ values improve QED, suggesting that these configurations generate more drug-like compounds while maintaining consistent synthetic accessibility and diversity.  We believe $\beta$ reaches an equilibrium around $\beta=1$, where binding affinity is maximized without sacrificing too much loss in molecular properties. This ablation study demonstrates that the parameter $\beta$ can offer a useful tool to train \method models with different desired trade-offs between binding affinity and useful molecular properties, which can vary for different drug development use cases.



\section{Conclusion}

In this paper, we present \method, a novel framework to align pretrained target-aware molecule diffusion models with desired functional properties via preference optimization. Our key innovation is the Exact Energy Preference Optimization method, which enables efficient and exact alignment of the diffusion model towards regions of lower binding energy and structural rationality specified by user-defined reward functions.
Extensive experiments on the CrossDocked2020 benchmark demonstrate the strong performance of \method. By incorporating user-defined reward functions and an improved Exact Energy Preference Optimization method, \method successfully achieves state-of-the-art performance in binding affinity while maintaining competitive molecular properties. 
In the future, we plan to explore more expressive molecular reward function classes within our framework and extend \method to real-world prospective drug design settings by integrating it into online drug discovery pipelines.


\section*{Acknowledgement}

We thank Jiaqi Han for the discussions on this project.
We gratefully acknowledge the support of ARO (W911NF-21-1-0125), ONR (N00014-23-1-2159), NVIDIA, and Chan Zuckerberg Biohub.
We also gratefully acknowledge the support of NSF under Nos. OAC-1835598 (CINES), CCF-1918940 (Expeditions), DMS-2327709 (IHBEM), IIS-2403318 (III); Stanford Data Applications Initiative, Wu Tsai Neurosciences Institute, Stanford Institute for Human-Centered AI, Chan Zuckerberg Initiative, Amazon, Genentech, GSK, Hitachi, SAP, and UCB.
Minkai Xu thanks the generous support of Sequoia Capital Stanford
Graduate Fellowship.



\bibliographystyle{plainnat}
\bibliography{ref}

\newpage
\appendix


\section{Limitations and Future Work}

While \method exhibits promising performance, there are still potential limitations to our current approach. For example, \method takes binding affinity as our reward function which is computed by AutoDock Vina~\citep{eberhardt2021autodock} in this work. However, computing binding energy via software is an approximation and sometimes can be very inaccurate. In the future, we plan to explore experiment-measured energy or ensemble different binding affinity calculation software, \eg, GlideScore~\citep{friesner2004glide}
In addition, in this work, we focus on an offline learning setting where the preference pairs are off-the-shelf. This is because computing binding affinity is computationally expensive. An important future direction to extend the work toward real-world drug discovery scenarios could be incorporating the online setting but with a limited number of query.

\section{Algorithm}
\label{app:sec:alg}


The pseudo-code for \method and \method-T are provided below. Sampling procedures are the same as \cite{guan20233d} and \cite{huang2023protein}.

\begin{algorithm}
\caption{Training Procedure \method}
\begin{algorithmic}[1]
\State \textbf{Input:} Protein-ligand binding dataset  $\{\gP, \gM^w, \gM^l \}_1^N$, pre-trained neural network $\phi_\theta$, reference network $\phi_\text{ref}$, learnable neural network $\psi_{\theta2}$ and pretrained interaction prior network $\psi_{\text{IP}}$.
\While{$\phi_\theta$ and $\psi_{\theta2}$ not converge}
    \State $[\![\rvp_{0}, \rvm_{0}^{w}, \rvm_{0}^{l}]\!] \sim \{ \gP, \gM^w, \gM^l \}^{N}_{i=1}$ where $\rvm_{0}^{w} = \{ \rvx_{0}^{w}, \rvv_{0}^{w}\}, \rvm_{0}^{l} = \{ \rvx_{0}^{l}, \rvv_{0}^{l}\}$
    \State Obtain $\rvr^w, \rvr^l$ for $\rvm^w, \rvm^l$, respectively.
    \State $t \sim U(0, \dots, T)$
    \State Move the complex to make CoM of protein atoms zero
    \State Obtain shifts  $[\![\rvs_0^{\gM_w}, \rvs_0^{\gM_l}]\!]$ and interactions $[\![\rvf_0^{\gM_{w}}, \rvf_0^{\gM_{l}}, \rvf_0^\gP]\!]$from $\psi_{\text{IP}}$ and $\psi_{\theta2}$ according to 
    \Statex \hspace{\algorithmicindent} ~\citep{huang2023protein}.
    \State Perturb $\rvx_0^{w}, \rvx_0^{l}$ to obtain $\rvx_t^{w}, \rvx_t^{l}$ with shifts  $\rvs_0^{\gM_w}, \rvs_0^{\gM_l}$
    \State \hspace{\algorithmicindent}  $\epsilon \sim \mathcal{N}(0, \textbf{I})$
    \State \hspace{\algorithmicindent} $\rvx_t^{w} = \sqrt{\bar{\alpha_t}} \rvx_0^{w} + \rvs_t^{\gM_w} + \sqrt{1 - \bar{\alpha_t}} \epsilon, \rvx_t^{l} = \sqrt{\bar{\alpha_t}} \rvx_0^{l} + \rvs_t^{\gM_l} + \sqrt{1 - \bar{\alpha_t}} \epsilon$
    \State Perturb $\rvv_0^{w},\rvv_0^{l}$ to obtain $\rvv_t^{w}, \rvv_t^{l}$
    \State \hspace{\algorithmicindent}    $g \sim \text{Gumbel}(0,1)$
    \State \hspace{\algorithmicindent}    $\log c^{w} = \log (\bar{\alpha_t} \rvv_0^{w} + (1 - \bar{\alpha_t}/K), \log c^{l} = \log (\bar{\alpha_t} \rvv_0^{l} + (1 - \bar{\alpha_t}/K)$
    \State \hspace{\algorithmicindent}    $\rvv_t^{w} = \text{onehot}(\arg \max_i (g_i + \log c_{i}^{w})), \rvv_t^{l} = \text{onehot}(\arg \max_i (g_i + \log c_{i}^{l}))$
    \State Embed $\rvv_t^{w}, \rvv_t^{l}$ into $\tilde{\rvh}_t^{w,0}, \tilde{\rvh}_t^{\gM_l,0}$, and embed $\rvv_0^{\gP}$ into $\tilde{\rvh}_t^{\gP,0}$
    \State Obtain features $[\![\rvh_t^{w,0}, \rvh_t^{\gM_l,0}, \rvh_t^{\gP,0}]\!]$ through prior-conditioning
    \State Predict $(\hat{\rvx}_{0|t}^{w}, \hat{\rvv}_{0|t}^{w})$ from $\phi_\theta([[\rvh_t^{\gM_w,0}, \rvh_t^{\gP,0}]], [\![\rvf_0^{\gM_{w}}, \rvf_0^\gP]\!])$
    \State Predict $(\hat{\rvx}_{0|t}^{l}, \hat{\rvv}_{0|t}^{l})$ from $\phi_\theta([\![\rvh_t^{\gM_l,0}, \rvh_t^{\gP,0}]\!], [\![\rvf_0^{\gM_{l}}, \rvf_0^\gP]\!])$
    \State Predict $(\hat{\rvx}_{0|t, \text{ref}}^{\text{w}}, \hat{\rvv}_{0|t, \text{ref}}^{\text{w}})$ from $\phi_\text{ref}([\![\rvh_t^{\gM_w,0}, \rvh_t^{\gP,0}]\!], [\![\rvf_0^{\gM_{w}}, \rvf_0^\gP]\!])$
    \State Predict $(\hat{\rvx}_{0|t, \text{ref}}^{l}, \hat{\rvv}_{0|t, \text{ref}}^{l}$ from $\phi_\text{ref}([\![\rvh_t^{\gM_{l},0}, \rvh_t^{\gP,0}]\!], [\![\rvf_0^{\gM_{l}}, \rvf_0^\gP]\!])$

    \State Compute loss $L$ with $(\hat{\rvx}_{0|t}^{w}, \hat{\rvv}_{0|t}^{w})$, $(\rvx_{0}^{l}, \rvv_{0}^{l})$, $(\hat{\rvx}_{0|t, \text{ref}}^{w}, \hat{\rvv}_{0|t, \text{ref}}^{w})$, $(\hat{\rvx}_{0|t, \text{ref}}^{l}, $and $\hat{\rvv}_{0|t, \text{ref}}^{l})$ 
    \Statex\hspace{\algorithmicindent} according to \Cref{eq:loss-alidiff-exact} 
   
    \State Update $\theta$ and $\theta2$ by minimizing $L$
\EndWhile
\end{algorithmic}
\end{algorithm}

\begin{algorithm}
\caption{Training Procedure for \method-T}
\begin{algorithmic}[1]
\State \textbf{Input:} Protein-ligand binding dataset  $\{\gP, \gM^w, \gM^l \}_1^N$, pre-trained neural network $\phi_\theta$, reference network $\phi_\text{ref}$
\While{$\phi_\theta$ not converge}
    \State $[\![\rvp, \rvm_{0}^{w}, \rvm_{0}^{l}]\!] \sim \{ \gP, \gM^w, \gM^l \}^{N}_{i=1}$ where $\rvm_{0}^{w} = \{ \rvx_{0}^{w}, \rvv_{0}^{w}\}, \rvm_{0}^{l} = \{ \rvx_{0}^{l}, \rvv_{0}^{l}\}$
    \State Sample diffusion time $t \sim U(0, \ldots, T)$
    \State Move the complex to make CoM of protein atoms zero
    \State Perturb $\rvx_0^w, \rvx_0^l$ to obtain $\rvx_t^w, \rvx_t^l$: $\rvx_t^w = \sqrt{\alpha_t \rvx_0^w + (1 - \alpha_t)\epsilon}$, $\rvx_t^l = \sqrt{\alpha_t \rvx_0^l + (1 - \alpha_t)\epsilon}$, 
    \Statex\hspace{\algorithmicindent} where $\epsilon \sim \mathcal{N}(0, I)$
    \State Perturb $\rvv_0^w, \rvv_0^l$ to obtain $v_t^w, v_t^l$: 
    \State \hspace{\algorithmicindent} $log c^w = log(\alpha_t \rvv_0^w + (1 - \alpha_t)/K)$
    \State \hspace{\algorithmicindent} $log c^l = log(\alpha_t \rvv_0^l + (1 - \alpha_t)/K)$
    \State \hspace{\algorithmicindent} $\rvv_t^w = \text{one\_hot}(\arg\max[g_i + log c_i^w])$
    \State \hspace{\algorithmicindent} $\rvv_t^l = \text{one\_hot}(\arg\max[g_i + log c_i^l]), $ where $g \sim \text{Gumbel}(0, 1)$
    \State Predict $[\hat{\rvx}_0^w, \hat{\rvv}_0^w]$ from $[\rvx_t^w, \rvv_t^w]$ with $\phi_\theta$: $[\hat{\rvx}_0^w, \hat{\rvv}_0^w] = \phi_\theta([\rvx_t^w, \rvv_t^w], t, \rvp)$
    \State Predict $[\hat{\rvx}_0^l, \hat{\rvv}_0^l]$ from $[\rvx_t^l, \rvv_t^l]$ with $\phi_\theta$: $[\hat{\rvx}_0^l, \hat{\rvv}_0^l] = \phi_\theta([\rvx_t^l, \rvv_t^l], t, \rvp)$
    \State Predict $[\hat{\rvx}_0^w, \bar{\rvv}_0^w]$ from $[\rvx_t^w, \rvv_t^w]$ with $\phi_\text{ref}$: $[\hat{\rvx}_0^w, \hat{\rvv}_0^w] = \phi_\text{ref}([\rvx_t^w, \rvv_t^w], t, \rvp)$
    \State Predict $[\bar{\rvx}_{0, \text{ref}}^l, \bar{\rvv}_{0, \text{ref}}^l]$ from $[\rvx_t^l, \rvv_t^l]$ with $\phi_\text{ref}$: $[\bar{\rvx}_{0, \text{ref}}^l, \bar{\rvv}_{0, \text{ref}}^l] = \phi_\text{ref}([\rvx_t^l, \rvv_t^l], t, \rvp)$
    \State Compute $\mathcal{L}(\theta) =  \mathcal{L}_\rvx(\theta) + \alpha \mathcal{L}_\rvv(\theta)$  according to \Cref{eq:loss-alidiff} 
    \State Update $\theta$ by minimizing $L$
\EndWhile
\end{algorithmic}
\end{algorithm}

\newpage
\section{Proof}
\label{app:sec:proof}

\newcommand{\KLDiv}{\mathbb{D}_{\textup{KL}}}

\begingroup
\def\thetheorem{\ref{theorem-kl}}
\begin{theorem}
    The objective function \Cref{eq:loss-alidiff-exact} optimizes a variational upper bound of the KL-divergence $\mathbb{D}_\textnormal{KL}\big(\hat{p}^*(\rvm|\rvp)||\hat{p}_\theta(\rvm|\rvp)\big)$, where $\hat{p}^*(\rvm|\rvp) \propto p_\textnormal{ref}(\rvm|\rvp) \exp(r(\rvm, \rvp))$ and $\hat{p}_\theta(\rvm|\rvp) \propto p_\textnormal{ref} (\rvm|\rvp) \left(\frac{p_\theta (\rvm|\rvp)}{p_\textnormal{ref}(\rvm|\rvp)}\right)^\beta $.
\end{theorem}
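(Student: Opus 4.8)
The plan is to read the E$^2$PO objective in \Cref{eq:loss-alidiff-exact} as a (per-timestep) binary cross-entropy between two Bradley--Terry preference models and then turn that cross-entropy into the asserted KL divergence plus a $\theta$-independent entropy, following the distribution-matching viewpoint. First I would make explicit the correspondence between each distribution in the statement and the pairwise preference it induces. For $\hat p^*(\rvm|\rvp)\propto p_\textnormal{ref}(\rvm|\rvp)\exp(r(\rvm,\rvp))$ the normalizing constant cancels in the Bradley--Terry logit, so the induced preference is $\hat p^*(\rvm^w\succ\rvm^l|\rvp)=\sigma(r(\rvm^w,\rvp)-r(\rvm^l,\rvp))=\sigma(\rvr^w-\rvr^l)$, which is exactly the soft label weighting the two loss terms. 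For $\hat p_\theta(\rvm|\rvp)\propto p_\textnormal{ref}(\rvm|\rvp)\big(p_\theta(\rvm|\rvp)/p_\textnormal{ref}(\rvm|\rvp)\big)^\beta$ the implicit reward is $\beta\log\big(p_\theta/p_\textnormal{ref}\big)$ up to a constant, so its induced preference is $\sigma\big(\beta\log\frac{p_\theta(\rvm^w|\rvp)}{p_\textnormal{ref}(\rvm^w|\rvp)}-\beta\log\frac{p_\theta(\rvm^l|\rvp)}{p_\textnormal{ref}(\rvm^l|\rvp)}\big)$, precisely the argument of $\sigma$ inside $\gL_{t-1}^{\rvx}+\gL_{t-1}^{\rvv}$. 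Reading $\bar\gL=1-\gL$ as the complementary term $-\log\big(1-\sigma(\cdot)\big)$, the bracket in \Cref{eq:loss-alidiff-exact} becomes the binary cross-entropy between the target and model preference Bernoullis.

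I would then apply $H(q,p)=\mathbb{D}_{\mathrm{KL}}(q\,\|\,p)+H(q)$. Because the target preference depends only on the fixed rewards, its entropy is independent of $\theta$, so minimizing the E$^2$PO objective is equivalent to minimizing the expected preference-level divergence $\mathbb{D}_{\mathrm{KL}}\big(\hat p^*(\succ)\,\|\,\hat p_\theta(\succ)\big)$. The next step lifts this pairwise statement to the full-distribution divergence $\mathbb{D}_{\mathrm{KL}}\big(\hat p^*(\rvm|\rvp)\,\|\,\hat p_\theta(\rvm|\rvp)\big)$: since matching all Bradley--Terry preferences matches reward differences, forcing the pairwise preferences of $\hat p_\theta$ onto those of $\hat p^*$ forces $\log\big(\hat p_\theta/p_\textnormal{ref}\big)$ and $\log\big(\hat p^*/p_\textnormal{ref}\big)$ to agree up to an additive constant, i.e. $\hat p_\theta=\hat p^*$, so the pairwise objective is a faithful surrogate for the distribution-level KL. Finally, the ``variational upper bound'' qualifier is inherited from the two approximations already used upstream: replacing the intractable marginal $p_\theta(\rvm_0|\rvp)$ by the reverse-process ELBO via Jensen's inequality (as in \Cref{eq:loss-dpo-diffusion}) and pulling the KL terms outside $-\log\sigma$ using its convexity (as in \Cref{eq:loss-alidiff}), both of which only enlarge the objective.

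The step I expect to be the main obstacle is this lift from the preference-level KL to the full-distribution KL, and in particular controlling the additive constants (the log-partition functions $Z^*$ and $Z_\theta$) so that the relation is genuinely an \emph{upper} bound rather than merely a monotone surrogate. I would address it by expanding $\mathbb{D}_{\mathrm{KL}}(\hat p^*\|\hat p_\theta)$ directly from the definitions, writing it as $\mathbb{E}_{\hat p^*}[\,r-\beta\log(p_\theta/p_\textnormal{ref})\,]+\log(Z_\theta/Z^*)$, and verifying that the per-timestep cross-entropy dominates this quantity once the diffusion decomposition is substituted. A secondary subtlety is justifying the identification $\bar\gL=1-\gL$ with $-\log(1-\sigma)$, which I would either adopt as the intended cross-entropy parametrization or bound explicitly.
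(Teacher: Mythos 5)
Your first half tracks the paper's own argument closely: the ``variational upper bound'' qualifier does indeed come from Jensen's inequality together with the convexity of $-\log\sigma$ (this is exactly \Cref{app:lemma:bound}, which bounds \Cref{eq:loss-alidiff-exact} by the exact-likelihood objective \Cref{eq:loss-exact}), and reading the weights $\sigma(\rvr^w-\rvr^l)$ and $1-\sigma(\rvr^w-\rvr^l)$ as the Bradley--Terry preference induced by $\hat p^*$, paired against the model preference induced by $\hat p_\theta$, is also how the paper sets up \Cref{app:lemma:exact}. Your charitable reading of $\bar{\gL}=1-\gL$ as the complementary cross-entropy term $-\log(1-\sigma(\cdot))$ is the intended one.

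The genuine gap is in the lift from the preference-level cross-entropy to the distribution-level divergence $\mathbb{D}_{\mathrm{KL}}\big(\hat p^*(\rvm|\rvp)\,\|\,\hat p_\theta(\rvm|\rvp)\big)$. The argument you give there --- matching all pairwise preferences pins down $\log(\hat p_\theta/p_\textnormal{ref})$ up to an additive constant, hence $\hat p_\theta=\hat p^*$ --- only shows that the two objectives share a minimizer; it does not show that the loss is an upper bound of (or equal to, up to a $\theta$-independent constant) the distribution-level KL, which is what the theorem asserts and what the ``upper bound'' chain needs. You correctly identify the partition functions as the obstacle, but the device that resolves it is absent from your plan. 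The paper rewrites \Cref{eq:loss-exact} as a two-way softmax cross-entropy $-\sum_{i=1}^2\frac{e^{r(\rvp,\rvm_i)}}{\sum_{j}e^{r(\rvp,\rvm_j)}}\log\frac{(p_\theta(\rvm_i|\rvp)/p_\textnormal{ref}(\rvm_i|\rvp))^\beta}{\sum_{j}(p_\theta(\rvm_j|\rvp)/p_\textnormal{ref}(\rvm_j|\rvp))^\beta}$ and then invokes the approximation $\mathbb{E}_{p_\textnormal{ref}(\rvm|\rvp)}\approx\mathbb{E}_{p_\textnormal{ref}(\rvm_{1:2}|\rvp)}$ (the two drawn ligands carry most of the reference mass), so that $\sum_{j=1}^2 e^{r(\rvp,\rvm_j)}\approx 2Z(\rvp)$ and $\sum_{j=1}^2\hat p_\theta(\rvm_j|\rvp)/p_\textnormal{ref}(\rvm_j|\rvp)\approx 2$. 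Once the empirical denominators are identified with the actual normalizers, the softmax weights become $\hat p^*(\rvm_i|\rvp)$ up to a constant and the log-ratio becomes $\log(\hat p_\theta/\hat p^*)$ plus a $\theta$-independent term, so the whole expression collapses directly to $\mathbb{E}_{\rvp\sim\gD}\big[\mathbb{D}_{\mathrm{KL}}(\hat p^*(\cdot|\rvp)\|\hat p_\theta(\cdot|\rvp))\big]+C$ with no separate ``lift'' required. Without this approximation (or an assumption of infinitely many ligands per pocket, which the paper explicitly avoids), the pairwise objective and the distribution-level KL live on different spaces, and neither an identity nor an inequality between them follows from a same-minimizer surrogate argument; you would need to add this step to complete the proof.
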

\addtocounter{theorem}{-1}
\endgroup

We prove the theorem with \Cref{app:lemma:bound,app:lemma:exact}. \Cref{app:lemma:bound} justifies the least square objective is the variational upper bound for preference optimization, and \Cref{app:lemma:exact} shows that regularized preference optimization corresponds to exact KL divergences between the optimal and parameterized distributions. 
A version of similar proof can be found in \citet{wallace2023diffusion} and \citet{ji2024towards,chen2024noise} respectively, and to be self-contained we incorporate these proofs here. Compared with \citet{wallace2023diffusion}, we introduce an additional term into the diffusion optimization. And compared with \citet{ji2024towards,chen2024noise}, we explicitly drop the assumption for drawing infinite samples $\rvm$ for each pocket $\rvp$.

\begin{lemma}
\label{app:lemma:bound}
    The objective function \Cref{eq:loss-alidiff-exact} $\mathcal{L}_\text{\method-E$^2$PO}(\theta) =  -\mathbb{E}_{(\rvp, \rvm_0^w, \rvm_0^l) \sim \mathcal{D}, t\sim[0,T], \rvm_{t}^w \sim q, \rvm_{t}^l\sim q} \big[ (\sigma (\rvr^w -\rvr^l)) (\gL_{t-1}^{\rvx} + \gL_{t-1}^{\rvv} ) + (1-\sigma (\rvr^w -\rvr^l)) (\bar{\gL}_{t-1}^{\rvx} + \bar{\gL}_{t-1}^{\rvv} ) \big]$ is a variational upper bound of:
    \begin{equation}
    \begin{aligned}
    \label{eq:loss-exact}
        \mathcal{L}_\text{E$^2$PO}(\theta) = & -\mathbb{E}_{(\rvp, \rvm^w, \rvm^l) \sim \mathcal{D}} \Big[ \big(\sigma (\rvr^w -\rvr^l)\big) \Big( \log \sigma \big( \beta \log \frac{p_\theta(\rvm^w | \rvp)}{p_{\text{ref}}(\rvm^w | \rvp)} - \beta \log \frac{p_\theta(\rvm^l | \rvp)}{p_{\text{ref}}(\rvm^l | \rvp)} \big) \Big) \\
        & + \big(1-\sigma (\rvr^w -\rvr^l)\big) \Big( \log \sigma \big( \beta \log \frac{p_\theta(\rvm^w | \rvp)}{p_{\text{ref}}(\rvm^w | \rvp)} - \beta \log \frac{p_\theta(\rvm^l | \rvp)}{p_{\text{ref}}(\rvm^l | \rvp)} \big) \Big) \Big].
    \end{aligned}
    \end{equation}
\end{lemma}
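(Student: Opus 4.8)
The plan is to reduce the claim to the single-preference diffusion ELBO bound already established in passing from \Cref{eq:dpo} to \Cref{eq:loss-alidiff} (following \citet{wallace2023diffusion}), and then to apply it separately to the two preference terms that make up the cross-entropy objective \Cref{eq:loss-exact}. The crucial observation is that $\mathcal{L}_\text{E$^2$PO}$ is a convex combination, with nonnegative weights $\sigma(\rvr^w-\rvr^l)$ and $1-\sigma(\rvr^w-\rvr^l)$ that depend only on the fixed rewards and not on $\theta$, of a ``forward'' preference term $\log\sigma(\beta\log\tfrac{p_\theta(\rvm^w|\rvp)}{p_{\text{ref}}(\rvm^w|\rvp)} - \beta\log\tfrac{p_\theta(\rvm^l|\rvp)}{p_{\text{ref}}(\rvm^l|\rvp)})$ and its reflected ``reverse'' counterpart obtained by interchanging $\rvm^w$ and $\rvm^l$. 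Since the weights are nonnegative and $\theta$-independent, it suffices to upper bound each term individually and then recombine linearly.

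First I would handle the forward term exactly as in the DPO-Diffusion derivation. The intractable marginals $p_\theta(\rvm_0|\rvp)$ are controlled through their reverse-process ELBO, which (after Jensen's inequality applied to the concave $\log$) expresses each log-likelihood ratio as an expectation over the diffusion trajectory; the convexity of $-\log\sigma$ is then used to pull this trajectory expectation outside the sigmoid. Replacing the reverse process by the forward process $q(\rvm_{1:T}|\rvm_0)$ for sampling and invoking the per-timestep KL decomposition into coordinate and type parts (the decomposition displayed just before \Cref{eq:loss-alidiff}) identifies the resulting upper bound with $\mathbb{E}_{t,q}[\gL_{t-1}^\rvx + \gL_{t-1}^\rvv]$.

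Next I would treat the reverse term by the symmetric argument, swapping the roles of the winning and losing samples (equivalently, flipping the sign of the sigmoid argument via $\log(1-\sigma(x)) = \log\sigma(-x)$), which produces the complementary per-timestep losses $\bar{\gL}_{t-1}^\rvx + \bar{\gL}_{t-1}^\rvv$ coming from the $\bar{\gL} = 1 - \gL$ construction. Taking the $\sigma(\rvr^w-\rvr^l)$-weighted combination of the two bounds, and using that a nonnegatively weighted sum of upper bounds upper-bounds the correspondingly weighted sum, gives $\mathcal{L}_\text{\method-E$^2$PO}(\theta) \ge \mathcal{L}_\text{E$^2$PO}(\theta)$ after negating and averaging over $\mathcal{D}$, which is precisely the claim relating \Cref{eq:loss-alidiff-exact} to \Cref{eq:loss-exact}.

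The main obstacle I anticipate is the careful bookkeeping of the second, regularizing term, which is the piece that is genuinely new relative to \citet{wallace2023diffusion}. Two points require attention: (i) confirming that the direction of Jensen's inequality is preserved for the reflected term, so that $\bar{\gL}_{t-1}^\rvx + \bar{\gL}_{t-1}^\rvv$ is a valid \emph{upper} bound of $-\log(1-\sigma(\cdot))$ rather than a lower bound, since flipping the sigmoid argument also flips the sign of each $\rvx$ and $\rvv$ discrepancy inside; and (ii) ensuring the two bounds are combined using the \emph{shared} forward-process samples consistently, so that a single outer expectation over $t$, $\rvm_t^w\sim q$, and $\rvm_t^l\sim q$ covers both terms simultaneously. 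Once these sign and sampling details are verified, the inequality follows from linearity and the convexity of $-\log\sigma$.
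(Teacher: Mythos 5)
Your approach is essentially the paper's: the paper proves this lemma only by deferring to Appendix S2 of \citet{wallace2023diffusion} and remarking that the bound follows from Jensen's inequality and the convexity of $-\log\sigma$, and your plan of bounding each of the two preference terms by the Diffusion-DPO argument and recombining with the nonnegative, $\theta$-independent weights $\sigma(\rvr^w-\rvr^l)$ and $1-\sigma(\rvr^w-\rvr^l)$ is exactly that argument spelled out. One remark: the obstacle you flag in point (i) is real, and it is a looseness in the paper rather than in your plan --- the Jensen/convexity machinery applied to the reflected term $-\log\sigma$ with negated argument produces the loss with the sign of the inner discrepancies flipped, \emph{not} the quantity $1-\mathcal{L}_{t-1}$ that the paper writes for $\bar{\mathcal{L}}_{t-1}$ (note $-\log\sigma(-z)\neq 1+\log\sigma(z)$, and $1-\mathcal{L}$ can even be negative); similarly, the second $\log\sigma$ term in the displayed \Cref{eq:loss-exact} should have $\rvm^w$ and $\rvm^l$ interchanged, as the softmax cross-entropy rewriting in \Cref{app:lemma:exact} makes clear, and you correctly read it that way.
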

We refer readers to Appendix S2 of Diffusion-DPO~\citep{wallace2023diffusion} for the full proof. The bound is derived from Jensen's inequality and the convexity of the function $-\log \sigma$.

\begin{lemma}
\label{app:lemma:exact}
    The objective function \Cref{eq:loss-exact} optimizes the KL-divergence $\mathbb{D}_\textnormal{KL}\big(\hat{p}^*(\rvm|\rvp)||\hat{p}_\theta(\rvm|\rvp)\big)$, where $\hat{p}^*(\rvm|\rvp) \propto p_\textnormal{ref}(\rvm|\rvp) \exp(r(\rvm, \rvp))$ and $\hat{p}_\theta(\rvm|\rvp) \propto p_\textnormal{ref} (\rvm|\rvp) \left(\frac{p_\theta (\rvm|\rvp)}{p_\textnormal{ref}(\rvm|\rvp)}\right)^\beta $.
\end{lemma}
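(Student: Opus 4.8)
The plan is to reparametrize both distributions in the lemma as Gibbs tilts of the reference and then recognize \Cref{eq:loss-exact} as an expected binary cross-entropy between the corresponding Bradley--Terry preference distributions. First I would introduce the implicit reward $\rho_\theta(\rvm\mid\rvp) := \beta\log\frac{p_\theta(\rvm\mid\rvp)}{p_\text{ref}(\rvm\mid\rvp)}$, so that $\hat p_\theta(\rvm\mid\rvp)\propto p_\text{ref}(\rvm\mid\rvp)\exp(\rho_\theta(\rvm\mid\rvp))$ exactly matches the definition in the statement, while $\hat p^*(\rvm\mid\rvp)\propto p_\text{ref}(\rvm\mid\rvp)\exp(r(\rvm,\rvp))$. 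Both are then exponential tilts of $p_\text{ref}$ by a ``reward'', so under the BT model their induced pairwise preferences are $\hat p^*(\rvm^w\succ\rvm^l\mid\rvp)=\sigma(\rvr^w-\rvr^l)$ and $\hat p_\theta(\rvm^w\succ\rvm^l\mid\rvp)=\sigma(\rho_\theta(\rvm^w)-\rho_\theta(\rvm^l))$. The bracketed log-ratio difference appearing in \Cref{eq:loss-exact} is precisely $\rho_\theta(\rvm^w)-\rho_\theta(\rvm^l)$, so the two $\log\sigma$ terms --- read with the second argument negated so that they form $\sigma(\cdot)$ and $1-\sigma(\cdot)$ --- assemble into the binary cross-entropy between $\mathrm{Bern}(\sigma(\rvr^w-\rvr^l))$ and $\mathrm{Bern}(\sigma(\rho_\theta^w-\rho_\theta^l))$.

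Second, I would apply the identity $H(q,p)=H(q)+\mathbb{D}_{\mathrm{KL}}(q\|p)$ to each pair. Since the target entropy $H(\mathrm{Bern}(\sigma(\rvr^w-\rvr^l)))$ is independent of $\theta$, minimizing \Cref{eq:loss-exact} is equivalent to minimizing $\mathbb{E}_{(\rvp,\rvm^w,\rvm^l)}\big[\mathbb{D}_{\mathrm{KL}}(\mathrm{Bern}(\sigma(\rvr^w-\rvr^l))\,\|\,\mathrm{Bern}(\sigma(\rho_\theta^w-\rho_\theta^l)))\big]$, i.e. the expected KL between the preference distributions of $\hat p^*$ and $\hat p_\theta$. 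This reduces the lemma to showing that driving these pairwise-preference KLs to zero is equivalent to driving $\mathbb{D}_{\mathrm{KL}}(\hat p^*\|\hat p_\theta)$ to zero.

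Finally I would lift the pairwise statement to a statement about the full distributions. Each per-pair KL vanishes iff $\sigma(\rho_\theta^w-\rho_\theta^l)=\sigma(\rvr^w-\rvr^l)$, i.e. $\rho_\theta(\rvm^w)-\rho_\theta(\rvm^l)=r(\rvm^w,\rvp)-r(\rvm^l,\rvp)$; enforcing this across the support forces $\rho_\theta(\rvm\mid\rvp)=r(\rvm,\rvp)+c(\rvp)$ for some pocket-dependent constant. Because an additive $\rvp$-only constant is absorbed by the partition functions $Z^*(\rvp)$ and $Z_\theta(\rvp)$, this yields $\hat p_\theta=\hat p^*$, i.e. $\mathbb{D}_{\mathrm{KL}}(\hat p^*\|\hat p_\theta)=0$. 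Hence the cross-entropy objective and the target KL share the same global minimizer, so \Cref{eq:loss-exact} optimizes $\mathbb{D}_{\mathrm{KL}}(\hat p^*\|\hat p_\theta)$. The main obstacle is exactly this last lift: with only finitely many pairs per pocket $\rvp$ one cannot literally recover the full distribution, which is where the prior proofs invoked an infinite-sample assumption. I would handle it by arguing at the population level --- taking expectations over the induced distribution of pairs and using that matching reward differences on the support determines the Gibbs tilt up to the normalization constant --- while being explicit that identifiability only holds over the region of $\rvm$-space that the data actually covers.
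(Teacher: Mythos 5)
Your proposal is correct in outline but takes a genuinely different route from the paper's, and it ultimately establishes a weaker statement. The paper does not pass through Bernoulli preference distributions at all: it rewrites \Cref{eq:loss-exact} as a two-way softmax cross-entropy $-\sum_{i=1}^2 \frac{e^{r(\rvp,\rvm_i)}}{\sum_j e^{r(\rvp,\rvm_j)}}\log\frac{(p_\theta(\rvm_i|\rvp)/p_{\textnormal{ref}}(\rvm_i|\rvp))^\beta}{\sum_j (p_\theta(\rvm_j|\rvp)/p_{\textnormal{ref}}(\rvm_j|\rvp))^\beta}$, and then treats the finite sums in the denominators as two-sample Monte Carlo estimates of expectations under $p_{\textnormal{ref}}$, i.e. $\sum_j e^{r(\rvp,\rvm_j)}\approx 2Z(\rvp)$ and $\sum_j \hat{p}_\theta(\rvm_j|\rvp)/p_{\textnormal{ref}}(\rvm_j|\rvp)\approx 2$. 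Under that approximation the loss becomes, identically in $\theta$ and not just at the optimum, $\mathbb{E}_{\rvp}\big[\mathbb{D}_{\mathrm{KL}}(\hat{p}^*(\cdot|\rvp)\|\hat{p}_\theta(\cdot|\rvp))\big]$ plus a $\theta$-independent constant. Your route --- binary cross-entropy equals entropy plus KL between $\mathrm{Bern}(\sigma(\rvr^w-\rvr^l))$ and $\mathrm{Bern}(\sigma(\rho_\theta^w-\rho_\theta^l))$, followed by an identifiability argument showing both objectives are minimized exactly when $\rho_\theta = r + c(\rvp)$ --- avoids the paper's rather strong ``two samples carry most of the probability mass'' approximation, which is a genuine advantage, and you are right that the remaining obstacle is the finite-pair-to-full-distribution lift. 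What your route buys less of: shared global minimizers is strictly weaker than equality up to an additive constant, and Theorem~\ref{theorem-kl} claims the training loss is a \emph{variational upper bound} of $\mathbb{D}_{\mathrm{KL}}(\hat{p}^*\|\hat{p}_\theta)$; your argument provides no inequality between the expected pairwise Bernoulli KL and the full-distribution KL, so it supports the qualitative conclusion that the optimum is $p_\theta\propto p_{\textnormal{ref}}\exp(r/\beta)$ but not the quantitative bound as stated. (You also correctly read the second $\log\sigma$ in \Cref{eq:loss-exact} with a negated argument; the displayed equation in the paper contains the same typo, and its own proof silently makes the identical correction when passing to the softmax form.)
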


\begin{proof}
    First of all, we can rewrite the objective \Cref{eq:loss-exact} in the following form, expanding the $\operatorname{sigmoid}$ function:
    \begin{equation}
    \begin{aligned}
        \mathcal{L}_\text{E$^2$PO}(\theta) &=\mathbb{E}_{\rvp\sim\gD} \mathbb{E}_{  p_{\textnormal{ref}} (\rvm_{1:2}|\rvp)}\Bigg[ - \sum_{i=1}^2 \frac{e^{r(\rvp, \rvm_{i})}}{ \sum_{j=1}^2 e^{r(\rvp, \rvm_{j})}} \log \frac{
        e^{ \beta \log \frac{p_\theta(\rvm_{i}|\rvp)}{ p_{\textnormal{ref}} (\rvm_{i}|\rvp)} }}{\sum_{j=1}^2
        e^{ \beta \log \frac{p_\theta(\rvm_{j}|\rvp)}{ p_{\textnormal{ref}} (\rvm_{j}|\rvp)} } }\Bigg]\\
        &=\mathbb{E}_{\rvp\sim\gD} \mathbb{E}_{  p_{\textnormal{ref}} (\rvm_{1:2}|\rvp)}\Bigg[ - \sum_{i=1}^2 \frac{e^{r(\rvp, \rvm_{i})}}{ \sum_{j=1}^2 e^{r(\rvp, \rvm_{j})}} \log \frac{
        e^{ \log \big(\frac{p_\theta(\rvm_{i}|\rvp)}{ p_{\textnormal{ref}} (\rvm_{i}|\rvp)}\big) ^\beta }}{\sum_{j=1}^2
        e^{ \log \big(\frac{p_\theta(\rvm_{j}|\rvp)}{ p_{\textnormal{ref}} (\rvm_{j}|\rvp)}\big)^\beta } }\Bigg]\\
        &=\mathbb{E}_{\rvp\sim\gD} \mathbb{E}_{  p_{\textnormal{ref}} (\rvm_{1:2}|\rvp)}\Bigg[ - \sum_{i=1}^2 \frac{e^{r(\rvp, \rvm_{i})}}{ \sum_{j=1}^2 e^{r(\rvp, \rvm_{j})}} \log \frac{
        { \big(\frac{p_\theta(\rvm_{i}|\rvp)}{ p_{\textnormal{ref}} (\rvm_{i}|\rvp)}\big) ^\beta }}{\sum_{j=1}^2
        {  \big(\frac{p_\theta(\rvm_{j}|\rvp)}{ p_{\textnormal{ref}} (\rvm_{j}|\rvp)}\big)^\beta } }\Bigg]
    \end{aligned}
    \end{equation}
    By the definition $\hat{p}_\theta(\rvm|\rvp) \propto p_\textnormal{ref}^{1-\beta} (\rvm|\rvp) p_\theta^\beta (\rvm|\rvp)$, we have
    $\frac{\hat{p}_\theta(\rvm|\rvp)}{p_\textnormal{ref} (\rvm|\rvp)} \propto \Big(
        \frac{p_\theta (\rvm|\rvp)}{p_\textnormal{ref} (\rvm|\rvp)}
        \Big)^\beta
    $ (by dividing both sides with $p_\textnormal{ref} (\rvm|\rvp)$).
    Then we can substitute this equation and rewrite $\mathcal{L}_\text{E$^2$PO}(\theta)$:
    \begin{align}
    \mathcal{L}_\text{E$^2$PO}(\theta) &=\mathbb{E}_{\rvp\sim\gD} \mathbb{E}_{  p_{\textnormal{ref}} (\rvm_{1:2}|\rvp)}\Bigg[ - \sum_{i=1}^2 \frac{e^{r(\rvp, \rvm_{i})}}{ \sum_{j=1}^2 e^{r(\rvp, \rvm_{j})}} \log \frac{
    { \big(\frac{p_\theta(\rvm_{i}|\rvp)}{ p_{\textnormal{ref}} (\rvm_{i}|\rvp)}\big) ^\beta }}{\sum_{j=1}^2 { \big(\frac{p_\theta(\rvm_{j}|\rvp)}{ p_{\textnormal{ref}} (\rvm_{j}|\rvp)}\big)^\beta } }\Bigg]\nonumber\\
    &=\mathbb{E}_{\rvp\sim\gD}
    \mathbb{E}_{ p_{\textnormal{ref}} (\rvm_{1:2}|\rvp)}
    \Bigg[ 
    - \sum_{i=1}^2
    \frac{
        e^{r(\rvp, \rvm_{i})}
    }{  
        \sum_{j=1}^2 e^{r(\rvp, \rvm_{j})}
    }
    \log 
    \frac{
        { \frac{\hat{p}_\theta(\rvm_{i}|\rvp)}{ p_{\textnormal{ref}} (\rvm_{i}|\rvp)}
        }
    }{
        \sum_{j=1}^2
        {
            \frac{\hat{p}_\theta(\rvm_{j}|\rvp)}{ p_{\textnormal{ref}} (\rvm_{j}|\rvp)}
        }
    }
    \Bigg]
    \label{app:eq:dpo_intermediate}
    \end{align}
    Since $p_{\textnormal{ref}} (\cdot|\rvp)$ is supervised fine-tuned on samples $\{\rvm_i\}_{i=1}^2$, we can assume $\{\rvm_i\}_{i=1}^2$ takes most of the probability mass and thus $\mathbb{E}_{p_\textnormal{ref}(\rvm|\rvp)} \approx \mathbb{E}_{p_\textnormal{ref}(\rvm_{1:2}|\rvp)}$. Then we have the following approximation:
\begin{align*}
    & \sum_{j=1}^2 
    \frac{
                \hat{p}_\theta(\rvm_j|\rvp)    
            }{
                p_\textnormal{ref}(\rvm_j|\rvp)    
            }
    \approx 2 \mathbb{E}_{p_\textnormal{ref}(\rvm|\rvp)}
    \bigg[    
    \frac{\hat{p}_\theta(\rvm|\rvp)}
    {p_\textnormal{ref}(\rvm|\rvp)}
    \bigg] = 2\sum_{\rvm \in \gM} p_\textnormal{ref}(\rvm|\rvp)
    \frac{\hat{p}_\theta(\rvm|\rvp)}
    {p_\textnormal{ref}(\rvm|\rvp)} = 2\sum_{\rvm \in \gM} \hat{p}_\theta(\rvm|\rvp) =2,\\
    & \sum_{j=1}^2 e^{r(\rvp, \rvm_j)} \approx 2 \mathbb{E}_{p_\textnormal{ref}(\rvm|\rvp)}\Big[
    e^{r(\rvp, \rvm)}
    \Big] = 2 \sum_{\rvm\in\gM}p_\textnormal{ref}(\rvm|\rvp)
    e^{r(\rvp, \rvm)} =2 Z(\rvp).
\end{align*} 
Then we can plug the above results into \Cref{app:eq:dpo_intermediate} and further simplify $\mathcal{L}_\text{E$^2$PO}$:
\begin{align*}
    \mathcal{L}_\text{E$^2$PO}(\theta) 
    & = 
    \mathbb{E}_{\rvp\sim\gD}
    \mathbb{E}_{ p_{\textnormal{ref}} (\rvm_{1:2}|\rvp)}
    \Bigg[ 
    -\sum_{i=1}^2 
    \frac{e^{ r(\rvp, \rvm_i)}}{2 Z(\rvp)}
    \log 
    \frac{
        \hat{p}_\theta(\rvm_i|\rvp)
    }{
        2  p_{\textnormal{ref}} (\rvm_i|\rvp)    
    }
    \Bigg]\\
    & = 
    \mathbb{E}_{\rvp\sim\gD}
    \mathbb{E}_{ p_{\textnormal{ref}} (\rvm_{1:2}|\rvp)}
    \Bigg[ 
    -\sum_{i=1}^2 
    \frac{e^{ r(\rvp, \rvm_i)}}{2 Z(\rvp)}
    \log \bigg(
    \frac{
        \hat{p}_\theta(\rvm_i|\rvp)
    }{
        p_{\textnormal{ref}} (\rvm_i|\rvp) \frac{e^{ r(\rvp, \rvm_i)}}{ Z(\rvp)}   
    }
    \frac{e^{ r(\rvp, \rvm_i)}}{2 Z(\rvp)} \bigg)
    \Bigg]\\
    & = 
    \mathbb{E}_{\rvp\sim\gD}
    \mathbb{E}_{ p_{\textnormal{ref}} (\rvm_{1:2}|\rvp)}
    \Bigg[ 
    -\sum_{i=1}^2 
    \frac{e^{ r(\rvp, \rvm_i)}}{2 Z(\rvp)}
    \log \bigg(
    \frac{
        \hat{p}_\theta(\rvm_i|\rvp)
    }{
        p_{\textnormal{ref}} (\rvm_i|\rvp) \frac{e^{ r(\rvp, \rvm_i)}}{ Z(\rvp)}   
    } \bigg) 
    -\sum_{i=1}^2 
    \frac{e^{ r(\rvp, \rvm_i)}}{2 Z(\rvp)}
    \log
    \bigg( \frac{e^{ r(\rvp, \rvm_i)}}{2 Z(\rvp)} \bigg)
    \Bigg],
\end{align*}
where the second term remains constant $C$ to $\theta$, and thus can be omitted when analyzing the optimization for $\theta$. 
Notice the normalized form of $
\hat{p}^*(\rvm|\rvp) = 
\frac{1}{Z(\rvp)}
 p_{\textnormal{ref}} (\rvm|\rvp)
e^{ r(\rvp, \rvm)}
$, we replace $\frac{1}{Z(\rvp)}
 p_{\textnormal{ref}} (\rvm|\rvp)
e^{ r(\rvp, \rvm)}$ with $\hat{p}^*$ and further simplify the above equation:
\begin{align*}
    \mathcal{L}_\text{E$^2$PO}(\theta)
    &= \mathbb{E}_{\rvp\sim\gD}
    \Bigg[
    -\frac{1}{2}\sum_{i=1}^2
    \bigg[
        \frac{e^{ r(\rvp, \rvm_i)}}{Z(\rvp)}
    \log \frac{
        \hat{p}_\theta(\rvm_i|\rvp)
    }{
        \hat{p}^*(\rvm_i|\rvp)    
    }
    \bigg]
    +C
    \Bigg]\\
    &=
    \mathbb{E}_{\rvp\sim\gD}
    \Bigg[
        -\mathbb{E}_{ p_{\textnormal{ref}} (\rvm|\rvp)}\bigg[
        \frac{e^{ r(\rvp, \rvm)}}{Z(\rvp)}
        \log \frac{
            \hat{p}_\theta(\rvm|\rvp)
        }{
            \hat{p}^*(\rvm|\rvp)    
        }
        \bigg]
        +C
    \Bigg]\\
    &= \mathbb{E}_{\rvp\sim\gD}
    \Bigg[
        -\sum_{\rvm \in \gM}
         p_{\textnormal{ref}} (\rvm|\rvp)\frac{e^{ r(\rvp, \rvm)}}{Z(\rvp)}
        \log \frac{
            \hat{p}_\theta(\rvm|\rvp)
        }{
            \hat{p}^*(\rvm|\rvp)    
        }
    +C
    \Bigg]\\
    &=
    \mathbb{E}_{\rvp\sim\gD}
    \Bigg[
        -\sum_{\rvm \in \gM}
        \hat{p}^*(\rvm|\rvp)
        \log \frac{
            \hat{p}_\theta(\rvm|\rvp)
        }{
            \hat{p}^*(\rvm|\rvp)    
        }
    + C
    \Bigg]\\
    &=
    \mathbb{E}_{\rvp\sim\gD}
    \Big[
        \KLDiv(\hat{p}^*(\cdot|\rvp)\|\hat{p}_\theta(\cdot|\rvp))
    + C
    \Big],
\end{align*}
which completes the proof of \Cref{app:lemma:exact}.
\end{proof}

\section{Implementation Details}
\label{app:sec:implementation}

\textbf{Data.} Following ~\citep{guan20233d}, proteins and ligands are expressed with atom coordinates and a one-hot vector containing the atom types. For proteins, each atom type is represented by a one-hot vector covering 20 distinct amino acids. Ligand atoms are encoded using a one-hot vector that discriminates among several elements, specifically {H, C, N, O, F, P, S, Cl}. Additionally, a one-dimensional binary flag is incorporated to differentiate whether atoms are part of the protein or the ligand. We further apply two separate single-layer Multi-Layer Perceptrons (MLPs) to transform the input data into 128-dimensional latent spaces, providing a compact and informative representation for subsequent computational stages. 

\textbf{Preference Pair Generation.} For each synthetic molecule, we first locate its corresponding protein binding site and compute reward according to user-defined reward function for all synthetic molecules of the corresponding the binding site. We select a losing sample with lower reward and construct the preference. The selection process is detailed in \Cref{app:sec:ablation}.

\textbf{Architecture.} We follow the same architecture as IPDiff~\citep{huang2023protein}, which includes a learnable diffusion denoising model $\phi_{\theta1}$, learnable neural network $\phi_{\theta2}$ and pretrained interaction prior network \textsc{IPNet}. The architecture of all models used in our method is the same as IPDiff. 

\textbf{Pretraining Details.} Following existing work, we adopted the Adam optimizer with a learning rate of 0.001 and parameters $\beta$ values of (0.95, 0.999). The training was conducted with a batch size of 4 and a gradient norm clipping value of 8. To balance the losses for atom type and atom position, we applied a scaling factor $\lambda$ of 100 to the atom type loss. Additionally, we introduced Gaussian noise with a standard deviation of 0.1 to the protein atom coordinates as a form of data augmentation. Our parameterized diffusion denoising model, IPDiff was trained on a single NVIDIA A6000 GPU and achieved convergence within 200k steps.

\textbf{Training Details.} For finetuning, the pre-trained diffusion model is further fine-tuned via the gradient descent method Adam with init learning rate=5e-6, betas=(0.95,0.999). We keep other setting the same as pretraining. We use $\beta = 5$ in \Cref{eq:dpo}. We trained our model with one NVIDIA GeForce GTX A100 GPU, and it could converge within 30k steps.

\section{More Experimental Results}
\label{app:sec:ablation}

\textbf{Effect of diffusion steps.} 
In \cref{fig:time_vis}, we present a comprehensive ablation study examining the impact of diffusion steps on the optimization of molecular properties using our novel \method framework. The visualizations at the top of the figure showcase the progressive refinement of molecular structures across increasing diffusion steps ($t=200$ to $t=1000$). These images clearly illustrate how our model gradually enhances the molecular fitting within the target binding site, which is critical for improving drug efficacy. The plotted data below provides a quantitative analysis of QED, SA, Vina Dock across all test targets. Notably, both \method (P) and \method (R) demonstrate significant improvements in QED and SA scores as the number of diffusion steps increases and exhibit a notable decrease in Vina Dock. Particularly, \method-E$^2$PO model have shown better performance across all three metrics, with significant improvement on binding affinity across the diffusion steps.

\begin{figure*}[!t]
    \centering
    \includegraphics[width=0.99\linewidth]{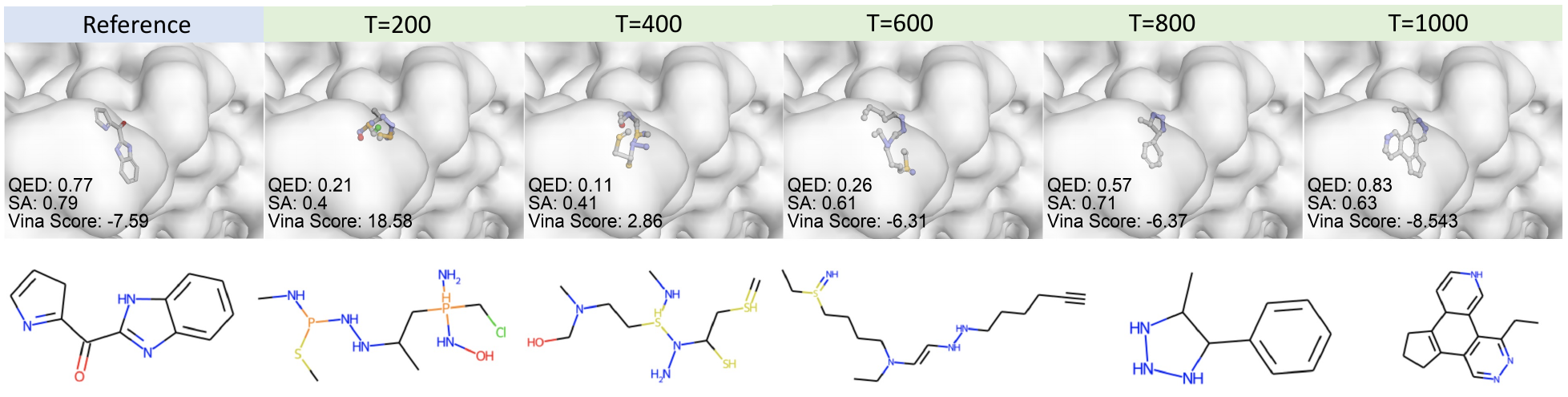}
     \vspace{-10pt}
     \includegraphics[width=0.99\linewidth]{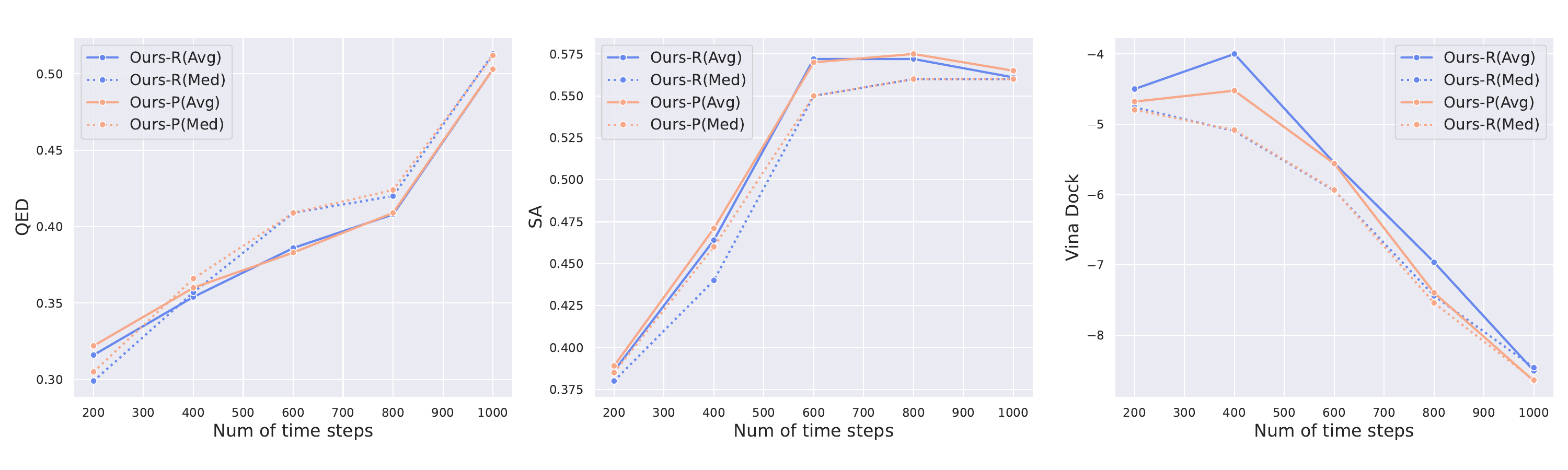}
    \caption{Ablation study on diffusion steps. The top shows a visualization of the generated molecule (4aua) under different time step. The bottom reports QED, SA and Vina Dock are reported under different diffusion steps(200, 400, 600, 800 and 1000). Blue lines represent \method-DPO and Red lines represent \method-E$^2$PO. }
    \vspace{-5pt}
    \label{fig:time_vis}
\end{figure*}


\begin{wraptable}{r}[5pt]{0.6\textwidth}
\vspace{-13pt}
\centering
\caption{Lipinski results for all methods.}
\label{table:exp:lip}
{\resizebox{0.99\linewidth}{!}{
\begin{tabular}{l|c|c|c|c|c|c}
\toprule
Methods & {\textbf{\method}} & \multicolumn{1}{c|}{\textbf{IPDiff}} & \multicolumn{1}{c|}{\textbf{TargetDiff}} & \multicolumn{1}{c|}{\textbf{AR}}   & \multicolumn{1}{c|}{\textbf{Pocket2Mol}} & \multicolumn{1}{c}{\textbf{Reference}}  \\
\midrule 
\textbf{Avg. Lipinski (↑)} & 4.48 & 4.52 & 4.51 & \underline{}{4.75} & \textbf{4.88} & 4.27\\
\bottomrule
\end{tabular}
}}
\vspace{-7pt}
\end{wraptable}

\textbf{Lipinski.}
We further compared Lipinski's Rule of Five~\citep{lipinski2012experimental} across all comparison methods. Lipinski's Rule of Five is another measurement for assessing drug-likeness besides QED, and we would like to incorporate this metric to validate our performance in generating drug-like molecules. The results of Lipinski's scores are reported in ~\Cref{table:exp:lip}. The results are consistent with our evaluation using QED score, as all diffusion-based models are not achieving high drug-likeness. We maintain similar drug-likeness as our backbone models targetDiff and IPDiff.

\begin{table}[!h]
\caption{Ablation study results with different choice of $\mathbf{m}^l$.}
\label{table:exp:ml}
\centering
{\resizebox{\textwidth}{!}{
\begin{tabular}{l|cc|cc|cc|cc|cc|cc|cc}
\toprule
\textbf{Choice of $\mathbf{m}^l$ } & \multicolumn{2}{c|}{\textbf{Vina Score (↓)}} & \multicolumn{2}{c|}{\textbf{Vina Min (↓)}} & \multicolumn{2}{c|}{\textbf{Vina Dock (↓)}} & \multicolumn{2}{c|}{\textbf{High Affinity(↑)}}   & \multicolumn{2}{c|}{\textbf{QED(↑)}}  & \multicolumn{2}{c|}{\textbf{SA(↑)}} & \multicolumn{2}{c}{\textbf{Diversity(↑)}} \\
& Avg. & Med. & Avg. & Med. & Avg. & Med. & Avg. & Med. & Avg. & Med. & Avg. & Med. & Avg. & Med.\\
\midrule 
worst & -7.07 & -7.95 & -8.09 & -8.17 & -8.90 & -8.81  & 73.4\% & 81.4\% & 0.50 & 0.50 & 0.57 & 0.56 & 0.73 & 0.71 \\
best & -6.80 & -7.66 & -7.83 & -7.69 & -8.64 & -8.05 & 70.2\% &  76.8\% & 0.50 & 0.52 & 0.56 & 0.55 & 0.74 & 0.71 \\
random & -6.96 & -7.82 & -8.03 & -8.00 & -8.77 & -8.20 & 72.1\% & 77.8\% & 0.50 & 0.51 & 0.56 & 0.55 & 0.74 & 0.72 \\
median & -6.96 & -7.85 & -8.01 & -7.96 & -8.80 & -8.24 & 72.5\% & 78.9\% & 0.50 & 0.51 & 0.57 & 0.55 & 0.74 & 0.72 \\ 
\bottomrule
\end{tabular}
}}
\end{table}

\textbf{Choice of $\mathbf{m}^l$.}  
Our generated dataset is obtained by directly transforming a standard labeled dataset into a pairwise preference dataset. Yet the binding affinity labels are continuous values where sometimes the difference between preferred and dispreferred is minimal. Therefore, the effect of energy preference optimization is highly sensitive to the overall data quality. \Cref{table:exp:ml} compares the performance of applying different strategies for selecting the dispreferred samples. "worst" indicates that the losing sample has the worst score from the user-defined reward function (lowest binding affinity). "best" suggests that the losing sample has the second-to-highest binding affinity(besides the preferred one). "random" and "median" mean that the losing samples are extracted randomly or from the median. Vina Score, Vina Min, Vina Dock, QED, SA, and Diversity are reported as average (Avg.) and median (Med.) values. Overall, the "worst" strategy, selecting the least favorable sample based on optimization objectiveness, consistently achieves the best performance in binding affinity metrics (Vina Score, Vina Min, and Vina Dock), while maintaining competitive drug-likeness (QED) and synthetic accessibility (SA). The "best" strategy, which may involve selecting the most favorable samples, performs poorly overall, which implies that energy preference optimization works better when there exists a larger discrepancy between $\mathbf{r}^w$ and $\mathbf{r}^l$. This allows the model to learn how to favor to $\mathbf{m}^w$ and avoid $\mathbf{m}^l$ during the finetuning process. The "random" and "median" strategies show intermediate performance, suggesting that a strategic approach to sample selection can significantly impact the efficacy of the resulting models.

\section{More Visualizations}
\label{app:sec:vis}

\begin{figure*}[!h]
    \centering
     \includegraphics[width=0.99\linewidth]{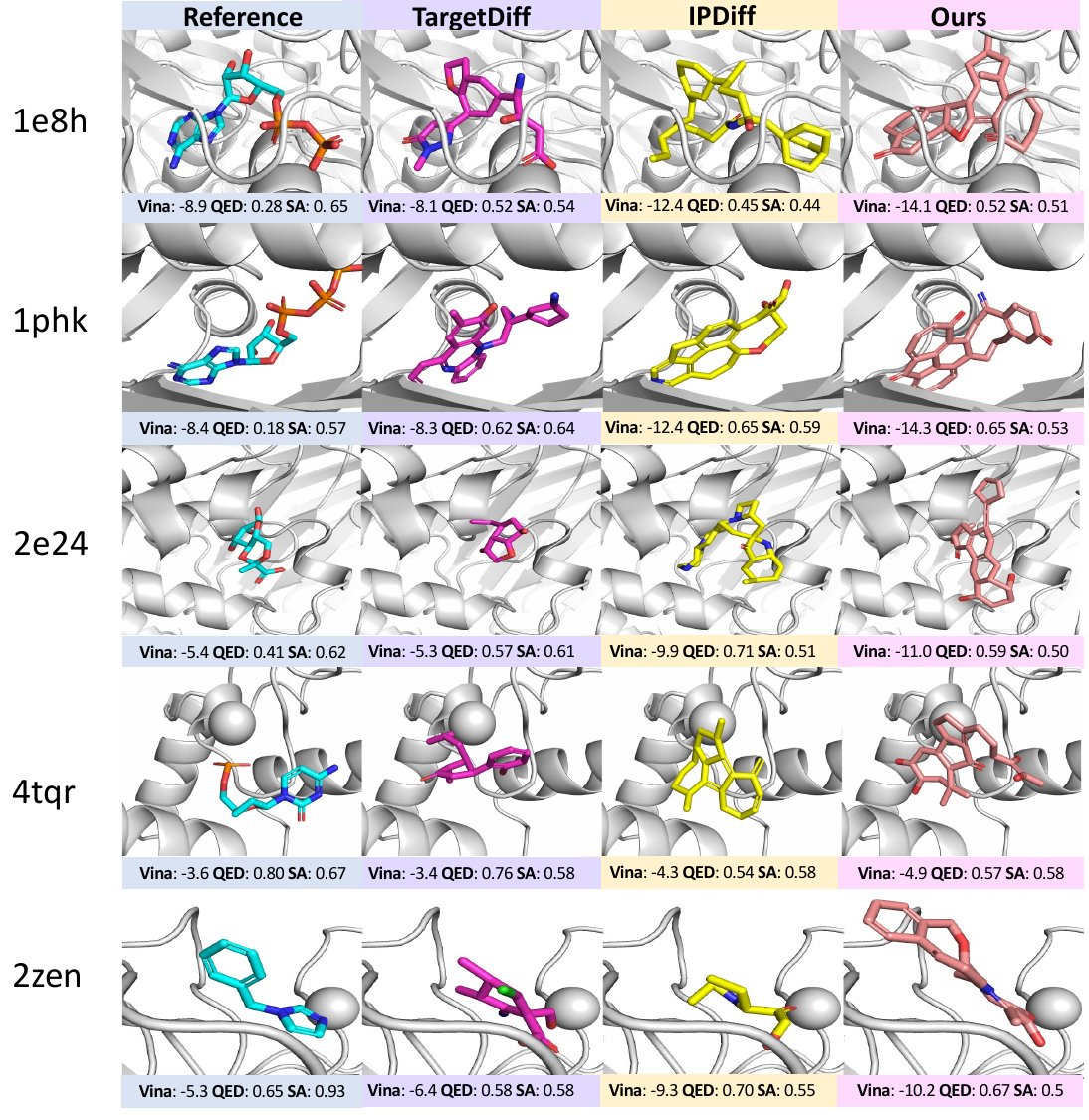} 
    \caption{More visualizations of generated ligands for protein pockets generated by TargetDiff, IPDiff, and \method.}
    \label{fig:more_vis}
\end{figure*}


\newpage
\section*{NeurIPS Paper Checklist}
\begin{enumerate}

\item {\bf Claims}
    \item[] Question: Do the main claims made in the abstract and introduction accurately reflect the paper's contributions and scope?
    \item[] Answer: \answerYes{} 
    \item[] Justification: Our claim is reflected through comprehensive experiments and theoretical proofs.
    \item[] Guidelines:
    \begin{itemize}
        \item The answer NA means that the abstract and introduction do not include the claims made in the paper.
        \item The abstract and/or introduction should clearly state the claims made, including the contributions made in the paper and important assumptions and limitations. A No or NA answer to this question will not be perceived well by the reviewers. 
        \item The claims made should match theoretical and experimental results, and reflect how much the results can be expected to generalize to other settings. 
        \item It is fine to include aspirational goals as motivation as long as it is clear that these goals are not attained by the paper. 
    \end{itemize}

\item {\bf Limitations}
    \item[] Question: Does the paper discuss the limitations of the work performed by the authors?
    \item[] Answer: \answerYes{} 
    \item[] Justification: We have discussed limitations of our work.
    \item[] Guidelines:
    \begin{itemize}
        \item The answer NA means that the paper has no limitation while the answer No means that the paper has limitations, but those are not discussed in the paper. 
        \item The authors are encouraged to create a separate "Limitations" section in their paper.
        \item The paper should point out any strong assumptions and how robust the results are to violations of these assumptions (e.g., independence assumptions, noiseless settings, model well-specification, asymptotic approximations only holding locally). The authors should reflect on how these assumptions might be violated in practice and what the implications would be.
        \item The authors should reflect on the scope of the claims made, e.g., if the approach was only tested on a few datasets or with a few runs. In general, empirical results often depend on implicit assumptions, which should be articulated.
        \item The authors should reflect on the factors that influence the performance of the approach. For example, a facial recognition algorithm may perform poorly when image resolution is low or images are taken in low lighting. Or a speech-to-text system might not be used reliably to provide closed captions for online lectures because it fails to handle technical jargon.
        \item The authors should discuss the computational efficiency of the proposed algorithms and how they scale with dataset size.
        \item If applicable, the authors should discuss possible limitations of their approach to address problems of privacy and fairness.
        \item While the authors might fear that complete honesty about limitations might be used by reviewers as grounds for rejection, a worse outcome might be that reviewers discover limitations that aren't acknowledged in the paper. The authors should use their best judgment and recognize that individual actions in favor of transparency play an important role in developing norms that preserve the integrity of the community. Reviewers will be specifically instructed to not penalize honesty concerning limitations.
    \end{itemize}

\item {\bf Theory Assumptions and Proofs}
    \item[] Question: For each theoretical result, does the paper provide the full set of assumptions and a complete (and correct) proof?
    \item[] Answer: \answerYes{} 
    \item[] Justification: Yes we have provided proof in Appendix.
    \item[] Guidelines:
    \begin{itemize}
        \item The answer NA means that the paper does not include theoretical results. 
        \item All the theorems, formulas, and proofs in the paper should be numbered and cross-referenced.
        \item All assumptions should be clearly stated or referenced in the statement of any theorems.
        \item The proofs can either appear in the main paper or the supplemental material, but if they appear in the supplemental material, the authors are encouraged to provide a short proof sketch to provide intuition. 
        \item Inversely, any informal proof provided in the core of the paper should be complemented by formal proofs provided in appendix or supplemental material.
        \item Theorems and Lemmas that the proof relies upon should be properly referenced. 
    \end{itemize}

    \item {\bf Experimental Result Reproducibility}
    \item[] Question: Does the paper fully disclose all the information needed to reproduce the main experimental results of the paper to the extent that it affects the main claims and/or conclusions of the paper (regardless of whether the code and data are provided or not)?
    \item[] Answer: \answerYes{} 
    \item[] Justification: We have provided all the codes needed to reproduce the results presented. Pseudo code is also included in appendix. 
    \item[] Guidelines:
    \begin{itemize}
        \item The answer NA means that the paper does not include experiments.
        \item If the paper includes experiments, a No answer to this question will not be perceived well by the reviewers: Making the paper reproducible is important, regardless of whether the code and data are provided or not.
        \item If the contribution is a dataset and/or model, the authors should describe the steps taken to make their results reproducible or verifiable. 
        \item Depending on the contribution, reproducibility can be accomplished in various ways. For example, if the contribution is a novel architecture, describing the architecture fully might suffice, or if the contribution is a specific model and empirical evaluation, it may be necessary to either make it possible for others to replicate the model with the same dataset, or provide access to the model. In general. releasing code and data is often one good way to accomplish this, but reproducibility can also be provided via detailed instructions for how to replicate the results, access to a hosted model (e.g., in the case of a large language model), releasing of a model checkpoint, or other means that are appropriate to the research performed.
        \item While NeurIPS does not require releasing code, the conference does require all submissions to provide some reasonable avenue for reproducibility, which may depend on the nature of the contribution. For example
        \begin{enumerate}
            \item If the contribution is primarily a new algorithm, the paper should make it clear how to reproduce that algorithm.
            \item If the contribution is primarily a new model architecture, the paper should describe the architecture clearly and fully.
            \item If the contribution is a new model (e.g., a large language model), then there should either be a way to access this model for reproducing the results or a way to reproduce the model (e.g., with an open-source dataset or instructions for how to construct the dataset).
            \item We recognize that reproducibility may be tricky in some cases, in which case authors are welcome to describe the particular way they provide for reproducibility. In the case of closed-source models, it may be that access to the model is limited in some way (e.g., to registered users), but it should be possible for other researchers to have some path to reproducing or verifying the results.
        \end{enumerate}
    \end{itemize}

\item {\bf Open access to data and code}
    \item[] Question: Does the paper provide open access to the data and code, with sufficient instructions to faithfully reproduce the main experimental results, as described in supplemental material?
    \item[] Answer: \answerYes{} 
    \item[] Justification: We have provided all the codes needed to reproduce the results presented.
    \item[] Guidelines:
    \begin{itemize}
        \item The answer NA means that paper does not include experiments requiring code.
        \item Please see the NeurIPS code and data submission guidelines (\url{https://nips.cc/public/guides/CodeSubmissionPolicy}) for more details.
        \item While we encourage the release of code and data, we understand that this might not be possible, so “No” is an acceptable answer. Papers cannot be rejected simply for not including code, unless this is central to the contribution (e.g., for a new open-source benchmark).
        \item The instructions should contain the exact command and environment needed to run to reproduce the results. See the NeurIPS code and data submission guidelines (\url{https://nips.cc/public/guides/CodeSubmissionPolicy}) for more details.
        \item The authors should provide instructions on data access and preparation, including how to access the raw data, preprocessed data, intermediate data, and generated data, etc.
        \item The authors should provide scripts to reproduce all experimental results for the new proposed method and baselines. If only a subset of experiments are reproducible, they should state which ones are omitted from the script and why.
        \item At submission time, to preserve anonymity, the authors should release anonymized versions (if applicable).
        \item Providing as much information as possible in supplemental material (appended to the paper) is recommended, but including URLs to data and code is permitted.
    \end{itemize}

\item {\bf Experimental Setting/Details}
    \item[] Question: Does the paper specify all the training and test details (e.g., data splits, hyperparameters, how they were chosen, type of optimizer, etc.) necessary to understand the results?
    \item[] Answer: \answerYes{} 
    \item[] Justification: We have provided implementation details in appendix.
    \item[] Guidelines:
    \begin{itemize}
        \item The answer NA means that the paper does not include experiments.
        \item The experimental setting should be presented in the core of the paper to a level of detail that is necessary to appreciate the results and make sense of them.
        \item The full details can be provided either with the code, in appendix, or as supplemental material.
    \end{itemize}

\item {\bf Experiment Statistical Significance}
    \item[] Question: Does the paper report error bars suitably and correctly defined or other appropriate information about the statistical significance of the experiments?
    \item[] Answer: \answerNo{} 
    \item[] Justification: Error bars are not reported because it would be too computationally expensive. Since our diffusion model sample 100 samples for each pocket, we believe reporting median and mean will be well reflected of the overall performance. 
    \item[] Guidelines:
    \begin{itemize}
        \item The answer NA means that the paper does not include experiments.
        \item The authors should answer "Yes" if the results are accompanied by error bars, confidence intervals, or statistical significance tests, at least for the experiments that support the main claims of the paper.
        \item The factors of variability that the error bars are capturing should be clearly stated (for example, train/test split, initialization, random drawing of some parameter, or overall run with given experimental conditions).
        \item The method for calculating the error bars should be explained (closed form formula, call to a library function, bootstrap, etc.)
        \item The assumptions made should be given (e.g., Normally distributed errors).
        \item It should be clear whether the error bar is the standard deviation or the standard error of the mean.
        \item It is OK to report 1-sigma error bars, but one should state it. The authors should preferably report a 2-sigma error bar than state that they have a 96\% CI, if the hypothesis of Normality of errors is not verified.
        \item For asymmetric distributions, the authors should be careful not to show in tables or figures symmetric error bars that would yield results that are out of range (e.g. negative error rates).
        \item If error bars are reported in tables or plots, The authors should explain in the text how they were calculated and reference the corresponding figures or tables in the text.
    \end{itemize}

\item {\bf Experiments Compute Resources}
    \item[] Question: For each experiment, does the paper provide sufficient information on the computer resources (type of compute workers, memory, time of execution) needed to reproduce the experiments?
    \item[] Answer: \answerYes{} 
    \item[] Justification: We have provided computer resources details in Appendix.
    \item[] Guidelines:
    \begin{itemize}
        \item The answer NA means that the paper does not include experiments.
        \item The paper should indicate the type of compute workers CPU or GPU, internal cluster, or cloud provider, including relevant memory and storage.
        \item The paper should provide the amount of compute required for each of the individual experimental runs as well as estimate the total compute. 
        \item The paper should disclose whether the full research project required more compute than the experiments reported in the paper (e.g., preliminary or failed experiments that didn't make it into the paper). 
    \end{itemize}
    
\item {\bf Code Of Ethics}
    \item[] Question: Does the research conducted in the paper conform, in every respect, with the NeurIPS Code of Ethics \url{https://neurips.cc/public/EthicsGuidelines}?
    \item[] Answer: \answerYes{} 
    \item[] Justification:  General machine learning method without specific concern in our mind.
    \item[] Guidelines:
    \begin{itemize}
        \item The answer NA means that the authors have not reviewed the NeurIPS Code of Ethics.
        \item If the authors answer No, they should explain the special circumstances that require a deviation from the Code of Ethics.
        \item The authors should make sure to preserve anonymity (e.g., if there is a special consideration due to laws or regulations in their jurisdiction).
    \end{itemize}

\item {\bf Broader Impacts}
    \item[] Question: Does the paper discuss both potential positive societal impacts and negative societal impacts of the work performed?
    \item[] Answer: \answerNA{} 
    \item[] Justification: General machine learning method without specific concern in our mind.
    \item[] Guidelines:
    \begin{itemize}
        \item The answer NA means that there is no societal impact of the work performed.
        \item If the authors answer NA or No, they should explain why their work has no societal impact or why the paper does not address societal impact.
        \item Examples of negative societal impacts include potential malicious or unintended uses (e.g., disinformation, generating fake profiles, surveillance), fairness considerations (e.g., deployment of technologies that could make decisions that unfairly impact specific groups), privacy considerations, and security considerations.
        \item The conference expects that many papers will be foundational research and not tied to particular applications, let alone deployments. However, if there is a direct path to any negative applications, the authors should point it out. For example, it is legitimate to point out that an improvement in the quality of generative models could be used to generate deepfakes for disinformation. On the other hand, it is not needed to point out that a generic algorithm for optimizing neural networks could enable people to train models that generate Deepfakes faster.
        \item The authors should consider possible harms that could arise when the technology is being used as intended and functioning correctly, harms that could arise when the technology is being used as intended but gives incorrect results, and harms following from (intentional or unintentional) misuse of the technology.
        \item If there are negative societal impacts, the authors could also discuss possible mitigation strategies (e.g., gated release of models, providing defenses in addition to attacks, mechanisms for monitoring misuse, mechanisms to monitor how a system learns from feedback over time, improving the efficiency and accessibility of ML).
    \end{itemize}
    
\item {\bf Safeguards}
    \item[] Question: Does the paper describe safeguards that have been put in place for responsible release of data or models that have a high risk for misuse (e.g., pretrained language models, image generators, or scraped datasets)?
    \item[] Answer: \answerNA{} 
    \item[] Justification: General machine learning method without specific concern in our mind.
    \item[] Guidelines:
    \begin{itemize}
        \item The answer NA means that the paper poses no such risks.
        \item Released models that have a high risk for misuse or dual-use should be released with necessary safeguards to allow for controlled use of the model, for example by requiring that users adhere to usage guidelines or restrictions to access the model or implementing safety filters. 
        \item Datasets that have been scraped from the Internet could pose safety risks. The authors should describe how they avoided releasing unsafe images.
        \item We recognize that providing effective safeguards is challenging, and many papers do not require this, but we encourage authors to take this into account and make a best faith effort.
    \end{itemize}

\item {\bf Licenses for existing assets}
    \item[] Question: Are the creators or original owners of assets (e.g., code, data, models), used in the paper, properly credited and are the license and terms of use explicitly mentioned and properly respected?
    \item[] Answer: \answerYes{} 
    \item[] Justification: All assets get credited. 
    \item[] Guidelines:
    \begin{itemize}
        \item The answer NA means that the paper does not use existing assets.
        \item The authors should cite the original paper that produced the code package or dataset.
        \item The authors should state which version of the asset is used and, if possible, include a URL.
        \item The name of the license (e.g., CC-BY 4.0) should be included for each asset.
        \item For scraped data from a particular source (e.g., website), the copyright and terms of service of that source should be provided.
        \item If assets are released, the license, copyright information, and terms of use in the package should be provided. For popular datasets, \url{paperswithcode.com/datasets} has curated licenses for some datasets. Their licensing guide can help determine the license of a dataset.
        \item For existing datasets that are re-packaged, both the original license and the license of the derived asset (if it has changed) should be provided.
        \item If this information is not available online, the authors are encouraged to reach out to the asset's creators.
    \end{itemize}

\item {\bf New Assets}
    \item[] Question: Are new assets introduced in the paper well documented and is the documentation provided alongside the assets?
    \item[] Answer: \answerYes{} 
    \item[] Justification: We have provided the code along with files to run the training process directly in the supplementary. 
    \item[] Guidelines:
    \begin{itemize}
        \item The answer NA means that the paper does not release new assets.
        \item Researchers should communicate the details of the dataset/code/model as part of their submissions via structured templates. This includes details about training, license, limitations, etc. 
        \item The paper should discuss whether and how consent was obtained from people whose asset is used.
        \item At submission time, remember to anonymize your assets (if applicable). You can either create an anonymized URL or include an anonymized zip file.
    \end{itemize}

\item {\bf Crowdsourcing and Research with Human Subjects}
    \item[] Question: For crowdsourcing experiments and research with human subjects, does the paper include the full text of instructions given to participants and screenshots, if applicable, as well as details about compensation (if any)? 
    \item[] Answer: \answerNA{} 
    \item[] Justification: No crowdsourcing and research with human subjects. 
    \item[] Guidelines:
    \begin{itemize}
        \item The answer NA means that the paper does not involve crowdsourcing nor research with human subjects.
        \item Including this information in the supplemental material is fine, but if the main contribution of the paper involves human subjects, then as much detail as possible should be included in the main paper. 
        \item According to the NeurIPS Code of Ethics, workers involved in data collection, curation, or other labor should be paid at least the minimum wage in the country of the data collector. 
    \end{itemize}

\item {\bf Institutional Review Board (IRB) Approvals or Equivalent for Research with Human Subjects}
    \item[] Question: Does the paper describe potential risks incurred by study participants, whether such risks were disclosed to the subjects, and whether Institutional Review Board (IRB) approvals (or an equivalent approval/review based on the requirements of your country or institution) were obtained?
    \item[] Answer: \answerNA{} 
    \item[] Justification: No crowdsourcing and research with human subjects
    \item[] Guidelines:
    \begin{itemize}
        \item The answer NA means that the paper does not involve crowdsourcing nor research with human subjects.
        \item Depending on the country in which research is conducted, IRB approval (or equivalent) may be required for any human subjects research. If you obtained IRB approval, you should clearly state this in the paper. 
        \item We recognize that the procedures for this may vary significantly between institutions and locations, and we expect authors to adhere to the NeurIPS Code of Ethics and the guidelines for their institution. 
        \item For initial submissions, do not include any information that would break anonymity (if applicable), such as the institution conducting the review.
    \end{itemize}

\end{enumerate}

\end{document}